\documentclass[11pt, a4paper]{article}
\usepackage[textwidth = 7in, textheight=25cm,nohead]{geometry}
\usepackage{ulem}
\usepackage{graphicx, amssymb, amsmath, amsthm, amstext, booktabs, float, multirow, subfigure, rotating,natbib}
\setlength{\parindent}{0cm} 

\usepackage{setspace}
\doublespacing
\setlength{\parindent}{0cm} 

\numberwithin{equation}{section} 
\numberwithin{figure}{section} 
\numberwithin{table}{section} 
 
\newtheorem{lemma}{Lemma}[section] 

\newtheorem{theorem}{Theorem}[section] 
\newtheorem{assumption}{Assumption}[section] 
\newtheorem{Proposition}{Proposition}[section]

\bibpunct{(}{)}{,}{a}{,}{,}

\def\ppn{\vskip 6pt \noindent }
\def\R{{\mathbb{R}}}

\def\P{{\mathbb{P}}}
\def\E{{\mathbb{E}}}
\newcommand{{\Xs}}{{\cal X}}
\newcommand{{\Ls}}{{\cal L}}
\newcommand{{\Ss}}{{\cal S}}
\newcommand{{\Gs}}{{\cal G}}
\newcommand{{\Hs}}{{\cal H}}
\newcommand{{\Ns}}{{\cal N}}
\newcommand{{\Is}}{{\cal I}}
\newcommand{{\Bs}}{{\cal B}}
\newcommand{{\Cs}}{{\cal C}}
\newcommand{{\Rs}}{{\cal R}}
\newcommand{{\Us}}{{\cal U}}
\newcommand{{\pp}}{{\mathbf p}}
\newcommand{{\KK}}{{\mathbf K}}
\newcommand{{\WW}}{{\mathbf W}}
\newcommand{{\HH}}{{\mathbf H}}
\newcommand{{\II}}{{\mathbf I}}
\newcommand{{\yy}}{{\mathbf y}}
\newcommand{{\ab}}{{\mathbf a}}
\newcommand{\C}{\mathbb{C}}
\newcommand{\G}{\mathbb{G}}
\newcommand{\B}{\mathbb{B}}

\newcommand{{\toL}}{{\overset{\mathcal{L}}{\longrightarrow\ }}}

\newcommand{{\dou}}{$\leadsto$\ }

\newcommand{{\phphu}}{{\phi(\Phi^{-1}(u))}}
\newcommand{{\phphv}}{{\phi(\Phi^{-1}(v))}}
\newcommand{{\phphUi}}{{\phi(\Phi^{-1}(\hat{U}_i))}}
\newcommand{{\phphVi}}{{\phi(\Phi^{-1}(\hat{V}_i))}}

\DeclareMathOperator{\var}{\mathbb{V}ar}

\DeclareMathOperator{\diag}{diag}

\newcommand{\indic}[1]{
\hbox{${\it 1}\hskip -4.5pt I_{\{ #1 \}}$}
}


\begin{document}
\title{Probit transformation for nonparametric kernel estimation of the copula density
}
\author{\sc{Gery Geenens}\thanks{Corresponding author: ggeenens@unsw.edu.au, School of Mathematics and Statistics, University of New South Wales, Sydney, NSW 2052 (Australia), tel +61 2 938 57032, fax +61 2 9385 7123 }\\School of Mathematics and Statistics,\\ University of New South Wales, Sydney, Australia,  \and \sc{Arthur Charpentier} \\ D\'epartement de math\'ematiques, \\ Universit\'e du Qu\'ebec \`a Montr\'eal, Canada,  \and and \sc{Davy Paindaveine} \\ECARES,\\ Universit\'e Libre de Bruxelles, Belgium}
\date{\today}
\maketitle
\thispagestyle{empty} 

\begin{abstract}

\noindent Copula modelling has become ubiquitous in modern statistics. Here, the problem of nonparametrically estimating a copula density is addressed. Arguably the most popular nonparametric density estimator, the kernel estimator is not suitable for the unit-square-supported copula densities, mainly because it is heavily affected by boundary bias issues. In addition, most common copulas admit unbounded densities, and kernel methods are not consistent in that case. In this paper, a kernel-type copula density estimator is proposed. It is based on the idea of transforming the uniform marginals of the copula density into normal distributions via the probit function, estimating the density in the transformed domain, which can be accomplished without boundary problems, and obtaining an estimate of the copula density through back-transformation. Although natural, a raw application of this procedure was, however, seen not to perform very well in the earlier literature. Here, it is shown that, if combined with 
local likelihood density estimation methods, the idea yields very good and easy to implement estimators, fixing boundary issues in a natural way and able to cope with unbounded copula densities. The asymptotic properties of the suggested estimators are derived, and a practical way of selecting the crucially important smoothing parameters is devised. Finally, extensive simulation studies and a real data analysis evidence their excellent performance compared to their main competitors. 

\noindent \textbf{Keywords:} copula density; transformation kernel density estimator; boundary bias; unbounded density; local likelihood density estimation.
\end{abstract}

\newpage

\section{Introduction} \label{sec:intro}

For the last two decades copula modelling has emerged as a major research area of statistics. By definition, a bivariate copula function $C$ is the joint cumulative distribution function (often abbreviated to `cdf' below) of a bivariate random vector whose marginals are Uniform over $[0,1]$, i.e., 
\[C: \Is \doteq [0,1]^2 \to [0,1]: (u,v) \to C(u,v) = \P(U \leq u, V \leq v), \]
where $U \sim \Us_{[0,1]}$, $V \sim \Us_{[0,1]}$. Copulas arise naturally in statistics and probability as a mere consequence of two well-known facts. First, the {\it probability-integral transform} result, establishing that for any continuous variable $X$ with distribution $F_X$, $F_X(X) \sim  \Us_{[0,1]}$, and second, {\it Sklar's theorem} \citep{Sklar59}, stating that for any continuous bivariate distribution whose cdf is $F_{XY}$, there exists a unique function $C$ such that
\begin{equation} F_{XY}(x,y) = C(F_X(x),F_Y(y)) \qquad \forall (x,y) \in \R^2,  \label{eqn:copdef} \end{equation}
where $F_X$ and $F_Y$ are the marginals of $F_{XY}$. According to the previous definition, this function $C$ is, indeed, a copula, called the copula of $F_{XY}$. From (\ref{eqn:copdef}) it is clear that $C$ describes how the two marginal distributions $F_X$ and $F_Y$ `interact' to produce the joint $F_{XY}$. It, therefore, disjoints the marginal behaviours of $X$ and $Y$ from their dependence structure, hence the attractiveness of the copula approach. See \cite{Joe97} and \cite{Nelsen06} for book length treatment of the foregoing ideas. Other, more compact reviews include \cite{Genest07,Hardle09} and \cite{Embrechts09}. Today, copulas are used extensively in statistical modelling in all areas, from quantitative finance and insurance to medicine and climatology. Therefore, empirically estimating a copula function from an observed bivariate sample $\{(X_{i},Y_{i});i=1,\ldots,n\}$ drawn from 
$F_{XY}$ has become an important problem of modern statistical modelling.

\ppn Of course, estimating $C$ essentially amounts to fitting a bivariate distribution, for what many parametric families have been suggested and studied: Gaussian, Student-$t$, Clayton, Frank or Gumbel copulas among others (see again \cite{Joe97} or \cite{Nelsen06} for details). These parametric models have formed the main body of the literature in the field so far. However, they suffer from the usual lack of flexibility of parametric approaches and the induced risk of misspecification. For instance, it has been argued that the main reason behind the 2009 global financial crisis was a reckless usage of the Gaussian copula \citep{Salmon09}. There is, therefore, a tremendous need for flexible nonparametric copula models, making no rigid assumptions on the underlying distributions. An early step in that direction was the empirical copula devised by \cite{Deheuvels79}. The related empirical copula process was studied further in \cite{Fermanian04,Tsukuhara05,Segers12} and \cite{Bucher13}, and turns out to be the 
cornerstone of a variety of nonparametric copula-based procedures, see e.g.\ \cite{Genest04,Genest09a,Gudendorf12} or \cite{Li13}, to cite only a few. Moreover, \cite{Fermanian03}, \cite{Chen07}, \cite{Omelka09} and \cite{Gijbels10} studied kernel methods to obtain flexible smooth estimates of the bivariate cdf $C$. 

\ppn It is usually the case, though, that a distribution is more readily interpretable in terms of its probability density function than directly in terms of its cdf, and a copula is, in many aspects, no different. Assume that the bivariate cdf $C$ is absolutely continuous. Then, its associated density is
\[c(u,v) = \frac{\partial^2 C}{\partial u \partial v }(u,v) \]
for $(u,v) \in \Is$, a function naturally enough called the {\it copula density}. This paper precisely addresses the problem of estimating this copula density $c$ in a nonparametric way, for what kernel methods again appear natural. This approach was pioneered in \cite{Behnen85} and \cite{Gijbels90}, and arguably remains very attractive compared to its competitors, such as splines smoothing \citep{Shen08,Kauermann13}, wavelets \citep{Hall06,Genest09b,Autin10}, Bernstein polynomials \citep{Bouezmarni10,Bouezmarni11,Janssen13} or others \citep{Qu12}, for its simplicity in construction, implementation and interpretation.

\ppn At least three factors make kernel estimation of $c(u,v)$ not standard, though, and have delayed the development of reliable kernel copula density estimators. A major concern is that kernel estimators suffer from boundary bias problems. Given the bivariate sample $\{(U_i=F_X(X_i),V_i=F_Y(Y_i));i=1,\ldots,n\}$, the standard kernel estimator for $c$, say $\hat{c}^*$, at $(u,v) \in \Is$ would be \citep[Chapter 4]{Wand95}
\begin{equation} \hat{c}^*(u,v) = \frac{1}{n|\HH_{UV}|^{1/2}} \sum_{i=1}^n \KK\left(\HH_{UV}^{-1/2}\binom{u-U_i}{v-V_i} \right),   \label{eqn:kerncopbasstar}\end{equation}
where $\KK: \R^2 \to \R$ is a bivariate kernel function and 
$\HH_{UV}$ is a symmetric positive-definite bandwidth matrix. It is, however, well known that an estimator such as (\ref{eqn:kerncopbasstar}) is in general not consistent on the boundaries of $\Is$: it does not `feel' the support boundaries of the underlying density and places through $\KK$ positive mass outside that support. In fact, standard kernel density arguments show that $\E(\hat{c}^*(u,v)) = \frac{1}{4} c(u,v) + O(h)$ at corners ($(u,v) \in \{(0,0),(1,0),(0,1),(1,1)\}$) and  $\E(\hat{c}^*(u,v))=\frac{1}{2} c(u,v) + O(h)$ on the borders ($(u,v) \in \{(0,\nu),(1,\nu),(\mu,0),(\mu,1): \mu,\nu \in (0,1) \}$). Although some papers ignored these boundary issues \citep{Fermanian03,Fermanian05,Scaillet07,Faugeras09}, it is clear that accurate estimation of $c$ calls for some boundary correction. Such corrections have indeed been proposed, inspired by ideas developed for univariate density estimation, e.g.\ mirror reflection \citep{Gijbels90} or the usage of boundary kernels \citep{Chen07}, but with mixed results. 

\ppn Secondly, kernel estimators are not consistent for unbounded densities. Yet, unlike most common probability densities, many copula densities of interest are unbounded. For instance, even in the apparently easy case of a bivariate Normal vector with moderate correlation, the copula density is unbounded in two of the corners of $\Is$. It is, therefore, particularly important to use estimators able to cope with such unboundedness. Finally, estimating $c$ cannot be made from a genuine random sample from its cdf $C$, as $C$ is the distribution of $(U,V)=(F_X(X),F_Y(Y))$ and $F_X$ and $F_Y$ are typically unknown. Hence, the observations $(U_i,V_i)$ are unavailable, and estimator (\ref{eqn:kerncopbasstar}) is, in fact, infeasible. In the copula literature, it is customary to use the `pseudo-observations' 
\begin{equation} \hat{U}_i = \frac{n}{n+1} \hat{F}_{Xn}(X_{i}) \qquad \text{ and } \qquad  \hat{V}_i = \frac{n}{n+1} \hat{F}_{Yn}(Y_{i}) \label{eqn:pseudobs} \end{equation}
where $\hat{F}_{Xn}(x) = \frac{1}{n}\sum_{j=1}^n \indic{X_{j}\leq x}$ is the empirical cdf of $X$, and similarly for $\hat{F}_{Yn}$. The rescaling by $n/(n+1)$ in (\ref{eqn:pseudobs}), aiming at keeping the pseudo-observations in the interior of $[0,1]$, is also common practice. Then, the pseudo-sample $\{(\hat{U}_i,\hat{V}_i);i=1,\ldots,n\}$ is treated mostly as a sample from $C$ and used instead of the `true' sample $\{(U_i,V_i);i=1,\ldots,n\}$, although this may affect the statistical properties of the ensuing estimators \citep{Charpentier07,Genest10}. 

\ppn The aim of this paper is to propose and study a new, kernel-type estimator of the copula density $c$. It is, in fact, the extension to the copula density case of the kernel-type estimator for univariate densities supported on the unit interval recently suggested in \cite{Geenens13}. That estimator takes the constrained nature of the support into account from the outset, i.e.\ without relying on {\it ad hoc} boundary corrections (reflection, boundary kernels, etc.). It proved superior to its main competitors in the simulation studies for a wide range of density shapes, including for unbounded densities. The idea seems, therefore, suitable for estimating copula densities as well. Specifically, \cite{Geenens13}'s estimator makes use of the transformation method, building on ideas first suggested in \citet[Chapter 9]{Devroye85} and \cite{Marron94}. In short, the initial $[0,1]$-supported variable of interest is transformed through the probit function into a variable whose support is unconstrained, the 
density of that transformed variable is estimated and an estimate of the initial density on $[0,1]$ is obtained by back-transformation. This method appears very natural and yields very good results, provided that the estimation step in the transformed domain is carried out with care, as \cite{Geenens13}'s results showed.

\ppn Exploring this idea in more details in the context of copula density estimation is the topic of Section \ref{sec:probtransfcop}. Several versions of the estimators will be suggested, and their asymptotic properties will be derived in Section \ref{sec:asympt}. Section \ref{sec:bandwidth} will address the crucial point of smoothing parameter selection in this framework. 
Simulation studies evidencing the very good practical behaviour of the probit-transformation estimators (Section \ref{sec:sim}), a real data analysis (Section \ref{sec:realdat}) and some final remarks (Section \ref{sec:ccl}) conclude the paper.

\section{Probit transformation kernel copula density estimation} \label{sec:probtransfcop}

\subsection{Probit transformation} \label{subsec:probit}

As recalled above, direct kernel estimation of the density $c$ of $(U,V)$ is made difficult mainly by the constrained nature of its support $\Is=[0,1]^2$. Now, define 
\[S = \Phi^{-1}(U) \qquad \text{ and } \qquad T = \Phi^{-1}(V),  \]
where $\Phi$ is the standard normal cdf and $\Phi^{-1}$ is its quantile function (i.e.\ the probit transformation). Given that both $U$ and $V$ are $\Us_{[0,1]}$, $S$ and $T$ both follow standard normal distributions, which, nevertheless, does not imply that the vector $(S,T)$ is bivariate normal. That will only be the case if the copula of the joint cdf of $(S,T)$, say $F_{ST}$, is the Gaussian copula, that is, if the copula $C$ of $F_{XY}$ itself is the Gaussian copula, as copulas are invariant to increasing transformations of their margins \citep[Theorem 2.4.3]{Nelsen06}. The idea is that, if $c(u,v)>0$ Lebesgue-a.e.\ over $\Is$ (which will be assumed throughout the paper), $(S,T)$ has unconstrained support $\R^2$ and estimating its density $f_{ST}$ cannot suffer from boundary issues. In addition, due to its normal margins, one can expect $f_{ST}$ to be well-behaved, and its estimation easy. In particular, under mild assumptions, $f_{ST}$ and its partial derivatives up to the second order will be seen to 
be uniformly bounded on $\R^2$, even in the case of unbounded copula density $c$ (Lemma \ref{lem:unicont} in the Appendix).

\ppn As the copula of $F_{ST}$ is $C$, $S \sim \Ns(0,1)$ and $T \sim \Ns(0,1)$, one can write Sklar's theorem (\ref{eqn:copdef}) for $(S,T)$:
\[F_{ST}(s,t)=\P(S \leq s, T \leq t) = C\left(\Phi(s),\Phi(t) \right), \qquad \forall (s,t) \in \R^2. \]
Upon differentiation with respect to $s$ and $t$, the joint density $f_{ST}$ of $(S,T)$ is found to be
\begin{equation} f_{ST}(s,t) = c(\Phi(s),\Phi(t))\phi(s)\phi(t) , \label{eqn:fST} \end{equation}
where $\phi$ is the standard normal density.  Inverting this expression, one obtains
\begin{equation} c(u,v) = \frac{f_{ST}(\Phi^{-1}(u),\Phi^{-1}(v))}{\phi(\Phi^{-1}(u))\phi(\Phi^{-1}(v))} \label{eqn:truecopratio} \end{equation}
for any $(u,v) \in (0,1)^2$. So, any estimator $\hat{f}_{ST}$ of $f_{ST}$ on $\R^2$ automatically produces an estimator of the copula density on the interior of $\Is$, viz.
\begin{equation} \hat{c}^{(\tau)}(u,v) = \frac{\hat{f}_{ST}(\Phi^{-1}(u),\Phi^{-1}(v))}{\phi(\Phi^{-1}(u))\phi(\Phi^{-1}(v))} \label{eqn:copratio} \end{equation}
where the superscript $(\tau)$ refers to the idea of transformation. When necessary, $\hat{c}^{(\tau)}$ can also be defined at the boundaries of $\Is$ by continuity. 
This estimator enjoys interesting properties. Clearly, $\hat{c}^{(\tau)}$ cannot allocate any positive probability weight outside $\Is$, since $(\Phi^{-1}(u),\Phi^{-1}(v))$ is not defined for $(u,v) \notin \Is$. Also, if $\hat{f}_{ST}$ is a {\it bona fide} density function, in the sense that $\hat{f}_{ST}(s,t) \geq 0$ for all $(s,t)$ and $\iint_{\R^2} \hat{f}_{ST}(s,t)\,ds\,dt = 1$, then automatically $\hat{c}^{(\tau)}(u,v) \geq 0$ for all $(u,v) \in \Is$ and $\iint_\Is \hat{c}^{(\tau)}(u,v)\,du\,dv = 1$. This is easily seen through the changes of variable $u=\Phi(s)$ and $v=\Phi(t)$. Finally, if $\hat{f}_{ST}$ is a uniformly (weak or strong) consistent estimator for $f_{ST}$, i.e.\ $\sup_{(s,t) \in \R^2} |\hat{f}_{ST}(s,t)-f_{ST}(s,t)| \to 0$ in probability or almost surely as $n \to \infty$, the estimator $\hat{c}^{(\tau)}$ inherits that uniform consistency on any compact proper subset of $\Is$. 

\subsection{The naive estimator}

A first natural idea would be to use the standard kernel density estimator as $\hat{f}_{ST}$ in (\ref{eqn:copratio}). Specifically, one would like to use an estimator like
\begin{equation} \hat{f}^*_{ST}(s,t) = \frac{1}{n|\HH_{ST}|^{1/2}} \sum_{i=1}^n \KK\left(\HH_{ST}^{-1/2}\binom{s-S_i}{t-T_i} \right) \label{eqn:fSTstar} \end{equation}
where $\KK$ is a bivariate kernel function and $\HH_{ST}$ is some symmetric positive-definite bandwidth matrix, and 
$\{(S_i = \Phi^{-1}(U_i),T_i=\Phi^{-1}(V_i)); i=1,\ldots,n\}$ is the sample in the transformed domain. However, as the $(U_i,V_i)$'s are unavailable in this context, so are the $(S_i,T_i)$'s. Instead, one has to use 
\begin{equation} \{(\hat{S}_i=\Phi^{-1}(\hat{U}_i),\hat{T}_i=\Phi^{-1}(\hat{V}_i));i=1,\ldots,n\}, \label{eqn:pseudosamp} \end{equation}
the pseudo-transformed sample. The feasible version of (\ref{eqn:fSTstar}) is, therefore,
\begin{equation} \hat{f}_{ST}(s,t) = \frac{1}{n|\HH_{ST}|^{1/2}} \sum_{i=1}^n \KK\left(\HH_{ST}^{-1/2}\binom{s-\hat{S}_i}{t-\hat{T}_i} \right). \label{eqn:fSThat} \end{equation}
Through (\ref{eqn:copratio}), this directly leads to the following probit transformation kernel copula density estimator: 
\begin{equation} \hat{c}^{(\tau)}(u,v) = \frac{1}{n|\HH_{ST}|^{1/2}\phphu\phphv} \sum_{i=1}^n \KK\left(\HH_{ST}^{-1/2}\binom{\Phi^{-1}(u)-\Phi^{-1}(\hat{U}_i)}{\Phi^{-1}(v)-\Phi^{-1}(\hat{V}_i)} \right). \label{eqn:naivefeas}\end{equation}
This is essentially the estimator suggested in \cite{Charpentier07}, also used as-is in \cite{LopezPaz13}, although it was not studied in any details in those two papers. \cite{Omelka09} derived the theoretical properties of an estimator for the copula $C$ (not its density) based on the same transformation. 

\ppn This idea was, however, called `naive' in \cite{Geenens13} in the univariate case. In fact, the method does not provide good results close to the boundaries, even though it was designed to fix boundary issues. Indeed, the estimator (\ref{eqn:naivefeas}) will be seen not to perform well in the next sections. \cite{Geenens13} explained the reasons for that  failure, and suggested some remedies. In particular, estimating the density in the transformed domain via local likelihood methods \citep{Loader96,Hjort96} offers a promising alternative while keeping the simplicity and the intuitive appeal of the probit-transformation estimator. This is investigated for estimating a copula density in the next subsection.

\subsection{Improved probit-transformation copula density estimators}

\cite{Loader96} and \cite{Hjort96} proposed two different, although similar in many aspects, formulations of the local likelihood density estimator. \cite{Loader96} locally approximates the logarithm of the unknown density by a polynomial, whereas \cite{Hjort96} consider local parametric density modelling. This paper will only make use of \cite{Loader96}'s idea, mainly because the asymptotic theory is more transparent. In any case, both formulations share most of their advantages and drawbacks, and typically yield very similar estimates. 

\ppn In this setting of estimating $f_{ST}$ from the pseudo-sample $\{(\hat{S}_i,\hat{T}_i);i=1,\ldots,n\}$, \cite{Loader96}'s local likelihood estimator is defined as follows. Around $(s,t) \in \R^2$, $\log f_{ST}$ is assumed to be well approximated by a polynomial of some order $p$. Classically, only local log-linear ($p=1$) and local log-quadratic ($p=2$) estimators are considered. Specifically, in the first case ($p=1$), it is assumed that, for $(\check{s},\check{t})$ `close' to $(s,t)$, 
\begin{equation} \log f_{ST}(\check{s},\check{t}) \simeq a_{1,0}(s,t) + a_{1,1}(s,t)(\check{s}-s) + a_{1,2}(s,t)(\check{t}-t) \doteq P_{\ab_1}(\check{s}-s,\check{t}-t)\label{eqn:locloglin}  \end{equation}
and in the second case ($p=2$)
\begin{align*} \log f_{ST}(\check{s},\check{t})  \simeq & \ a_{2,0}(s,t) + a_{2,1}(s,t)(\check{s}-s) + a_{2,2}(s,t)(\check{t}-t) \\ & + a_{2,3}(s,t)(\check{s}-s)^2 + a_{2,4}(s,t)(\check{t}-t)^2 + a_{2,5}(s,t)(\check{s}-s)(\check{t}-t) \\ &\ \doteq P_{\ab_2}(\check{s}-s,\check{t}-t) .\end{align*}
The vectors $\ab_1(s,t)=(a_{1,0}(s,t),a_{1,1}(s,t),a_{1,2}(s,t))$ and $\ab_2(s,t) \doteq (a_{2,0}(s,t),\ldots,a_{2,5}(s,t))$ are then estimated by solving a weighted maximum likelihood problem. For either $p=1,2$,
\begin{multline} \tilde{\ab}_p(s,t) = \arg \max_{\ab_p} \left\{ \sum_{i=1}^n \KK\left(\HH_{ST}^{-1/2}\binom{s-\hat{S}_i}{t-\hat{T}_i} \right) P_{\ab_p}(\hat{S}_i-s,\hat{T}_i-t)\right. \\ \left.  - n\iint_{\R^2} \KK\left(\HH_{ST}^{-1/2}\binom{s-\check{s}}{t-\check{t}} \right) \exp\left(P_{\ab_p}(\check{s}-s,\check{t}-t)\right)\,d\check{s}\,d\check{t} \right\}, \label{eqn:loclogpol} \end{multline}
where, as previously, $\KK$ is a bivariate kernel function and $\HH_{ST}$ is a symmetric positive-definite bandwidth matrix. The estimate of $f_{ST}$ at $(s,t)$ is then, naturally, $\tilde{f}^{(1)}_{ST}(s,t) = \exp(\tilde{a}_{1,0}(s,t))$ for local log-linear, and $\tilde{f}^{(2)}_{ST}(s,t) = \exp(\tilde{a}_{2,0}(s,t))$ for local log-quadratic modelling. 
`Improved' probit-transformation kernel copula density estimators for $c(u,v)$ are finally obtained through (\ref{eqn:copratio}) as
\begin{equation} \tilde{c}^{(\tau,p)}(u,v) = \frac{\tilde{f}^{(p)}_{ST}(\Phi^{-1}(u),\Phi^{-1}(v))}{\phi(\Phi^{-1}(u))\phi(\Phi^{-1}(v))} \label{eqn:impctilde} \end{equation}
for $p=1$ and $p=2$. The motivation and the advantages of estimating $f_{ST}$ by local likelihood methods instead of raw kernel density estimation are related to the detailed discussion in \cite{Geenens13}, and are therefore omitted here. The asymptotic properties of these estimators (`naive' and `improved' probit-transformation kernel copula density estimators) are derived in the next section.

\section{Asymptotic properties} \label{sec:asympt}

For simplicity, it will be assumed that $\KK$ is a product Gaussian kernel, i.e.\ $\KK(z_1,z_2) = \phi(z_1)\phi(z_2)$, and $\HH_{ST} = h^2 \II$ for some $h >0$. Note that, in practice, there are reasons to keep an unconstrained, non-diagonal bandwidth matrix $\HH_{ST}$. In particular, the copula density is typically stretched along one of the diagonals of the unit square when some dependence is present in $(X,Y)$, which provides a density $f_{ST}$ likewise stretched along one of the 45 degrees lines in $\R^2$. Hence, using a bandwidth matrix directing smoothing in that particular direction is sensible \citep{Duong05}, as discussed further in Section \ref{sec:bandwidth}. That said, theoretical results for that general case would be less tractable than, while qualitatively equivalent to, the simpler case presented below. Note that for that particular kernel $\KK$, $\iint \KK^2(z_1,z_2)\,dz_1\,dz_2 = (4\pi)^{-1}$ and $\iint z^2_k \KK(z_1,z_2)\,dz_1\,dz_2 = 1$, $k=1,2$. These quantities frequently arise in the properties of 
kernel estimators, and direct use of these particular numerical values will be made in the results below.

\subsection{The naive estimator and an amended version} \label{subsec:naive}

Consider the naive estimator (\ref{eqn:naivefeas}) which, with the above specifications of $\KK$ and $\HH_{ST}$, reduces to 
\begin{equation} \hat{c}^{(\tau)}(u,v) = \frac{1}{nh^2\phphu\phphv} \sum_{i=1}^n \phi\left(\frac{\Phi^{-1}(u)-\Phi^{-1}(\hat{U}_i)}{h}\right)\phi\left(\frac{\Phi^{-1}(v)-\Phi^{-1}(\hat{V}_i)}{h}\right). \label{eqn:naivefeas2} \end{equation}
Given (\ref{eqn:copratio}), it is clear that its statistical properties will entirely depend on those of (\ref{eqn:fSThat}), here
\begin{equation} \hat{f}_{ST}(s,t) = \frac{1}{nh^2} \sum_{i=1}^n \phi\left(\frac{s-\hat{S}_i}{h} \right) \phi\left(\frac{t-\hat{T}_i}{h}  \right).  \label{eqn:fSThat2} \end{equation}
If $f_{ST}$ admits continuous second-order partial derivatives, expressions for the bias and the variance of the ideal, infeasible estimator $\hat{f}^*_{ST}$ (\ref{eqn:fSTstar}), as well as its asymptotic normality, are well known \citep[Chapter 4]{Wand95}. Proposition \ref{thm:fShat} below ascertains that using the pseudo-observations (\ref{eqn:pseudosamp}) instead of genuine ones does not affect those properties. Note that (\ref{eqn:fSThat2}) can be written
\begin{equation} \hat{f}_{ST}(s,t)  = \frac{1}{h^2} \iint_{\R^2} \phi\left(\frac{s-\Phi^{-1}(u)}{h} \right) \phi\left(\frac{t-\Phi^{-1}(v)}{h}  \right) d\hat{C}_n(u,v), \label{eqn:fSThat3} \end{equation}
where $\hat{C}_n$ is the empirical copula
\begin{equation} \hat{C}_n(u,v) = \frac{1}{n} \sum_{i=1}^n \indic{\hat{U}_i \leq u,\hat{V}_i \leq v}. \label{eqn:empcop} \end{equation} 
Hence, although living in the transformed domain, the behaviour of $\hat{f}_{ST}(s,t)$ will be driven by the properties of $\hat{C}_n$ on $\Is$. The following assumptions will be made.
\begin{assumption} \label{ass:FXY} The sample $\{(X_{i},Y_{i});i=1,\ldots,n\}$ is an i.i.d.\ sample from the joint distribution $F_{XY}$, an absolutely continuous distribution with marginals $F_X$ and $F_Y$ strictly increasing on their support;
\end{assumption}
\begin{assumption} \label{ass:cop} The copula $C$ of $F_{XY}$ is such that $(\partial C/ \partial u)(u,v)$ and $(\partial^2 C/ \partial u^2)(u,v)$ exist and are continuous on $\{(u,v): u \in (0,1),v\in [0,1]\}$, and $(\partial C/ \partial v)(u,v)$ and $(\partial^2 C/ \partial v^2)(u,v)$ exist and are continuous on $\{(u,v): u \in [0,1],v\in (0,1)\}$. In addition, there are constants $K_1$ and $K_2$ such that
\begin{equation*} \left\{ \begin{array}{rcl} \displaystyle \left|\frac{\partial^2 C }{\partial u^2}(u,v) \right| &\leq & \displaystyle \frac{K_1}{u(1-u)} \qquad  \text{ for } (u,v) \in (0,1) \times [0,1]; \\
\displaystyle \left|\frac{\partial^2 C }{\partial v^2}(u,v) \right| &\leq & \displaystyle \frac{K_2}{v(1-v)} \qquad  \text{ for } (u,v) \in [0,1] \times (0,1); \end{array}\right. \end{equation*}
\end{assumption}
\begin{assumption} \label{ass:copdens}
The density $c$ of $C$ exists, is positive and admits continuous second-order partial derivatives on the interior of the unit square $\Is$. In addition, there is a constant $K_{00}$ such that
\begin{equation}  c(u,v)  \leq K_{00} \min\left(\frac{1}{u(1-u)},\frac{1}{v(1-v)} \right) \qquad \forall (u,v) \in (0,1)^2. \label{eqn:unbounded} \end{equation} 
\end{assumption}
Assumption \ref{ass:FXY} guarantees the existence and the uniqueness of the copula $C$ of $F_{XY}$. Assumptions \ref{ass:cop}-\ref{ass:copdens} mostly reduce to Conditions 2.1 and 4.1 in \cite{Segers12}, who claims that they hold for many copula families, such as Gaussian, Archimedean and most extreme-value copulas. Moreover, \cite{Omelka09} explicitly show that they are satisfied by the Clayton, Gumbel, Gaussian and Student copulas. Compared to \cite{Segers12}, Assumption \ref{ass:copdens} only requires further the existence and continuity of second-order partial derivatives of $c$, which is natural in kernel estimation. It is worth noting that $c$ is allowed to grow unboundedly in some of the corners of $\Is$, provided (\ref{eqn:unbounded}) remains valid.  The following result can now be stated. An important observation is that it holds true for $h \sim n^{-a}$, $a 
\in (0,1/4)$, which includes the optimal 
bandwidth order known to be $h \sim n^{-1/6}$ for bivariate density estimation.

\begin{Proposition} \label{thm:fShat} Assume that $\KK(z_1,z_2)=\phi(z_1)\phi(z_2)$ and $\HH_{ST} = h^2 \II$ with $h \sim n^{-a}$ for some $a \in (0,1/4)$. Under Assumptions \ref{ass:FXY}-\ref{ass:copdens}, the estimator (\ref{eqn:fSThat2}) at any $(s,t) \in \R^2$ is such that
\begin{equation} \sqrt{nh^2}\left(\hat{f}_{ST}(s,t) - f_{ST}(s,t)-h^2 b_{ST}(s,t)\right)  \toL \Ns\left(0 , \sigma_{ST}^2(s,t) \right), \label{eqn:normfSThat} \end{equation}
where $b_{ST}(s,t)  = \frac{1}{2}\left(\frac{\partial^2 f_{ST}}{\partial s^2}(s,t)+\frac{\partial^2 f_{ST}}{\partial t^2}(s,t) \right)$ and $\sigma_{ST}^2(s,t) = \frac{f_{ST}(s,t)}{4\pi}$.
\end{Proposition}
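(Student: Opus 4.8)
The plan is to compare the feasible estimator $\hat f_{ST}$ in \eqref{eqn:fSThat3} with the infeasible ideal estimator $\hat f^*_{ST}$ in \eqref{eqn:fSTstar}, and show that the difference is asymptotically negligible at the relevant scale, namely $o_\P\bigl((nh^2)^{-1/2}\bigr)$. Since standard kernel theory \citep[Chapter 4]{Wand95} already delivers \eqref{eqn:normfSThat} for $\hat f^*_{ST}$ (using Lemma \ref{lem:unicont} to guarantee that $f_{ST}$ has bounded, continuous second-order partial derivatives so the bias expansion $h^2 b_{ST}(s,t)$ and variance $\sigma_{ST}^2(s,t)=f_{ST}(s,t)/(4\pi)$ are valid, with the constants $\iint \KK^2 = (4\pi)^{-1}$ and $\iint z_k^2 \KK = 1$ plugged in), it suffices to prove
\[
\hat f_{ST}(s,t) - \hat f^*_{ST}(s,t) = o_\P\bigl(h^2\bigr) = o_\P\bigl((nh^2)^{-1/2}\bigr),
\]
the last equality holding because $h\sim n^{-a}$ with $a\in(0,1/4)$ forces $nh^6\to\infty$. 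Writing both estimators as integrals against measures on $\Is$ — $\hat f_{ST}$ against $d\hat C_n$ as in \eqref{eqn:fSThat3}, and $\hat f^*_{ST}$ against the ``true'' empirical measure $d\bar C_n$ built from the unobservable $(U_i,V_i)$ — the difference is an integral of the smooth kernel against $d(\hat C_n - \bar C_n)$.

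The key step is therefore an integration-by-parts argument. Setting $g_h(u,v) = h^{-2}\phi\bigl((s-\Phi^{-1}(u))/h\bigr)\phi\bigl((t-\Phi^{-1}(v))/h\bigr)$, I would write
\[
\hat f_{ST}(s,t) - \hat f^*_{ST}(s,t) = \iint_{\Is} g_h(u,v)\, d(\hat C_n - \bar C_n)(u,v)
= \iint_{\Is} (\hat C_n - \bar C_n)(u,v)\, \frac{\partial^2 g_h}{\partial u\,\partial v}(u,v)\, du\, dv,
\]
after dispatching the boundary terms (which vanish because $\hat C_n$ and $\bar C_n$ agree with the uniform marginals' copula-type behaviour at the edges, or are handled by the $n/(n+1)$ rescaling). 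One then bounds this by $\sup_{\Is}|\hat C_n - \bar C_n|$ times $\iint |\partial^2 g_h/\partial u\,\partial v|$. The empirical-copula-process literature \citep{Tsukuhara05,Segers12} gives $\sup_{\Is}|\hat C_n(u,v) - \bar C_n(u,v)| = O_\P(n^{-1/2}\log n)$ or, more carefully, a bound that is actually $O_\P(n^{-1/2})$ away from the corners and controlled near them by Assumptions \ref{ass:cop}--\ref{ass:copdens}; the point is it is $O_\P(n^{-1/2}\log n)$ uniformly. Meanwhile a change of variables $u=\Phi(\sigma)$, $v=\Phi(\tau)$ shows $\iint_\Is |\partial^2 g_h/\partial u\,\partial v|\,du\,dv$ grows only polynomially in $1/h$ — each $u$-derivative brings a factor roughly $1/(h\phi(\Phi^{-1}(u)))$ but the Jacobian $\phi(\Phi^{-1}(u))$ cancels the worst of it, leaving something like $O(h^{-2})$ or $O(h^{-3})$. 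Combining, the difference is $O_\P(n^{-1/2}(\log n) h^{-c})$ for some small fixed $c$, and since $a<1/4$ one checks $n^{-1/2}(\log n)h^{-c} = o(h^2)$, which closes the argument.

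The main obstacle I anticipate is the boundary/corner behaviour: the integrand $\partial^2 g_h/\partial u\,\partial v$ is large near the edges of $\Is$ (where $\Phi^{-1}$ and its derivatives blow up), precisely where $\hat C_n - \bar C_n$ is hardest to control and where the unbounded-density tail condition \eqref{eqn:unbounded} of Assumption \ref{ass:copdens} must be invoked. Making the two polynomial-in-$h^{-1}$ and $n^{-1/2}$-rate estimates mesh uniformly over $\Is$ — rather than only on a compact proper subset — will require splitting $\Is$ into an interior region and shrinking corner neighbourhoods, estimating the interior part by the crude product bound above and the corner part using the sharper weighted modulus-of-continuity bounds for the empirical copula process (Condition 2.1–type estimates in \citet{Segers12}) together with the integrability afforded by \eqref{eqn:unbounded}. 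Once the bandwidth constraint $a\in(0,1/4)$ is seen to give exactly enough room for all these error terms to be $o(h^2)$, the asymptotic normality \eqref{eqn:normfSThat} transfers verbatim from $\hat f^*_{ST}$ to $\hat f_{ST}$ by Slutsky's theorem.
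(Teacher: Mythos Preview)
There is a genuine gap, and it is not where you think it is. First, the reduction step is mis-stated: $a\in(0,1/4)$ does \emph{not} force $nh^6\to\infty$ (that needs $a<1/6$), and even when $nh^6\to\infty$ one has $h^2\gg (nh^2)^{-1/2}$, so $o_\P(h^2)$ is \emph{weaker} than $o_\P((nh^2)^{-1/2})$, not equivalent to it. What you actually need is $\sqrt{nh^2}\bigl(\hat f_{ST}-\hat f^*_{ST}\bigr)=o_\P(1)$, i.e.\ $\hat f_{ST}-\hat f^*_{ST}=o_\P(n^{-1/2}h^{-1})$.

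Second, and more seriously, the crude product bound cannot deliver this. Your own change of variables gives exactly
\[
\iint_\Is\left|\frac{\partial^2 g_h}{\partial u\,\partial v}\right|du\,dv
= h^{-4}\left(\int_\R|\phi'(w)|\,dw\right)^2 h^2
= O(h^{-2}),
\]
while $\sup_\Is|\hat C_n-\bar C_n|$ is $O_\P(n^{-1/2})$ \emph{and not smaller}: the process $\sqrt{n}(\hat C_n-\bar C_n)=\C_n-\B_n$ converges weakly to the nondegenerate limit $-\,\partial_u C\cdot\B_C(u,1)-\partial_v C\cdot\B_C(1,v)$. Hence your bound is $O_\P(n^{-1/2}h^{-2})$, which after scaling by $\sqrt{nh^2}$ is $O_\P(h^{-1})\to\infty$. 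You are off by exactly one power of $h$, uniformly over the whole range $a\in(0,1/4)$. Incidentally, the corner/boundary issue you flag as the main obstacle is not one: the Jacobian from $u=\Phi(\sigma)$ cancels the $1/\phi(\Phi^{-1}(u))$ cleanly, as your computation already shows.

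What recovers the missing factor of $h$ is the \emph{structure} of the pseudo-observation error, which your sup-norm bound throws away. The paper replaces $\C_n$ by the approximating process $\G_n=\B_n-\partial_u C\cdot\B_n(u,1)-\partial_v C\cdot\B_n(1,v)$ via \cite{Segers12}'s strong approximation at rate $O_{\text{a.s.}}(n^{-1/4}(\log n)^{1/2}(\log\log n)^{1/4})$; combined with $\iint|\partial^2 g_h|=O(h^{-2})$ this remainder is $o(1)$ precisely when $a<1/4$ (this is where the bandwidth constraint comes from). The two correction terms are then handled separately: each is a product of a \emph{smooth} function of one coordinate and a \emph{one-dimensional} empirical process in the other. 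Integrating first in the smooth direction, one Taylor-expands and uses $\int_\R\phi'(w)\,dw=0$ to gain the crucial extra factor $h$; the residual involves increments of the univariate process $\B_n(\cdot,1)$ over intervals of length $O(h)$, controlled to be $O_{\text{a.s.}}(\sqrt{h\log h^{-1}})$ by the oscillation-modulus result of \cite{Einmahl87}. This structural cancellation, not a sharper sup bound, is the missing idea.
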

\begin{proof} See Appendix. \end{proof}
As recalled in Section \ref{sec:intro}, resorting to pseudo-observations is known to usually affect the statistical properties of the estimators of interest in copula modelling. In particular, an overriding result in the field is the weak convergence of the empirical copula process
\begin{equation} \C_n(u,v) \doteq \sqrt{n}(\hat{C}_n(u,v) - C(u,v)) \leadsto \G_C(u,v) \doteq \B_C(u,v) - \frac{\partial C}{\partial u}(u,v) \B_C(u,1) - \frac{\partial C}{\partial v}(u,v) \B_C(1,v), \label{eqn:limitempcop} \end{equation}
where $\B_C(u,v)$ is a bivariate pinned $C$-Brownian Sheet on $\Is$, i.e.\ the tight centred Gaussian process whose covariance function is $\E\left( \B_C(u,v)\B_C(u',v') \right) = C(u \wedge u',v \wedge v')-C(u,v)C(u',v')$ \citep{Fermanian04,Segers12}. In fact, $\B_C(u,v)$ would be the limiting process if the margins were known, i.e.\ if `genuine' $U_i$'s and $V_i$'s were used in (\ref{eqn:empcop}). The extra two terms in the right-hand side of (\ref{eqn:limitempcop}) are, therefore, often interpreted as `the price to pay' for using pseudo-observations -- although this effect is sometimes advantageous \citep{Genest10}. Yet, the proof of Proposition \ref{thm:fShat} reveals that the effect of those two terms asymptotically vanishes when one looks at the properties of the kernel density estimator (\ref{eqn:fSThat2}). As a result, the rate of convergence, as well as the expressions for asymptotic bias and variance, are the same as what one would obtain for the ideal estimator $\hat{f}^*_{ST}$ using genuine
i.i.d.\ observations from $C$. Intuitively, this is because a kernel density estimator converges slower than an empirical distribution function. Resorting to pseudo-observations may disturb the $\sqrt{n}$-convergence of the latter, but it goes unnoticed (asymptotically) compared to the nonparametric convergence rate $O((nh^2)^{-1/2})$ of the former.  

\ppn Now, differentiating (\ref{eqn:fST}) yields
\begin{align} \frac{\partial f_{ST}}{\partial s}(s,t) & = \frac{\partial c}{\partial u}(\Phi(s),\Phi(t))\phi^2(s) \phi(t) - sc(\Phi(s),\Phi(t))  \phi(s)\phi(t), \quad \text{ and} \label{eqn:firstderfST} \\
\frac{\partial^2 f_{ST}}{\partial s^2}(s,t) & =  \frac{\partial^2 c}{\partial u^2}(\Phi(s),\Phi(t))\phi^3(s) \phi(t) - 3s \frac{\partial  c}{\partial u}(\Phi(s),\Phi(t)) \phi^2(s) \phi(t)  +(s^2-1) c(\Phi(s),\Phi(t))  \phi(s)\phi(t) \label{eqn:secondderfST} \end{align}
(and similar for $\frac{\partial f_{ST}}{\partial t}$, $\frac{\partial^2 f_{ST}}{\partial t^2}$ and $\frac{\partial^2 f_{ST}}{\partial s \partial t}$). Hence, combining (\ref{eqn:fST}), (\ref{eqn:copratio}),  (\ref{eqn:normfSThat}) and (\ref{eqn:secondderfST}), one can state the following theorem for the `naive' probit transformation kernel copula density estimator (\ref{eqn:naivefeas2}).
\begin{theorem} \label{thm:chat} Under the assumptions of Proposition \ref{thm:fShat}, the `naive' probit transformation kernel copula density estimator (\ref{eqn:naivefeas2}) at any $(u,v) \in (0,1)^2$ is such that
\begin{equation*} \sqrt{nh^2}\left(\hat{c}^{(\tau)}(u,v) - c(u,v)-h^2 b(u,v)\right)  \toL \Ns\left(0 , \sigma^2(u,v) \right) , \label{eqn:normchat} \end{equation*}
where 
\begin{multline} b(u,v) =   \frac{1}{2}  \left\{\frac{\partial^2 c}{\partial u^2}(u,v)\phi^2(\Phi^{-1}(u))+ \frac{\partial^2 c}{\partial v^2}(u,v)\phi^2(\Phi^{-1}(v)) \right. \\ \ \ \ - 3\left( \frac{\partial c}{\partial u}(u,v) \Phi^{-1}(u)\phi(\Phi^{-1}(u))+ \frac{\partial c}{\partial v}(u,v) \Phi^{-1}(v)\phi(\Phi^{-1}(v))\right) \\
 + c(u,v)  \left(\{\Phi^{-1}(u)\}^2+\{\Phi^{-1}(v)\}^2-2 \right)  \bigg\}  \label{eqn:biaschat}
 \end{multline}
and $\displaystyle \sigma^2(u,v) = \frac{c(u,v)}{4\pi\phi(\Phi^{-1}(u))\phi(\Phi^{-1}(v))}$.
\end{theorem}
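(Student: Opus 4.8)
The plan is to read the result off directly from Proposition \ref{thm:fShat}, since (\ref{eqn:copratio}) exhibits $\hat{c}^{(\tau)}(u,v)$ as a \emph{fixed deterministic rescaling} of $\hat{f}_{ST}$ evaluated at a \emph{fixed} point. Fix $(u,v)\in(0,1)^2$ and set $(s,t)=(\Phi^{-1}(u),\Phi^{-1}(v))$, so that $\Phi(s)=u$, $\Phi(t)=v$, and the constant $d\doteq\phi(s)\phi(t)=\phi(\Phi^{-1}(u))\phi(\Phi^{-1}(v))$ is strictly positive. By (\ref{eqn:copratio}), applied with $\hat{f}_{ST}$ the naive estimator (\ref{eqn:fSThat2}), one has $\hat{c}^{(\tau)}(u,v)=\hat{f}_{ST}(s,t)/d$, while by (\ref{eqn:fST}), $f_{ST}(s,t)/d=c(\Phi(s),\Phi(t))=c(u,v)$. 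Since $(s,t)$ is an interior point with $\Phi(s),\Phi(t)\in(0,1)$, Assumption \ref{ass:copdens} guarantees that $c$ and its first- and second-order partial derivatives are continuous at $(\Phi(s),\Phi(t))$; combined with the smoothness of $\phi$ and $\Phi$ and with (\ref{eqn:fST}), this shows that $f_{ST}$ admits continuous second-order partial derivatives in a neighbourhood of $(s,t)$, so Proposition \ref{thm:fShat} is applicable at $(s,t)$.

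Next I would invoke (\ref{eqn:normfSThat}) at $(s,t)$ and divide the whole statement by the positive constant $d$. Because multiplying a sequence that converges in distribution by a fixed nonzero constant preserves the convergence, with limit and centring rescaled accordingly, this immediately gives
\[ \sqrt{nh^2}\left(\hat{c}^{(\tau)}(u,v)-c(u,v)-h^2\,\frac{b_{ST}(s,t)}{d}\right)\toL\Ns\left(0,\,\frac{\sigma_{ST}^2(s,t)}{d^2}\right). \]
It then remains only to identify the centring and the variance with $b(u,v)$ and $\sigma^2(u,v)$. For the variance this is immediate: $\sigma_{ST}^2(s,t)=f_{ST}(s,t)/(4\pi)$ and, by (\ref{eqn:fST}), $f_{ST}(s,t)=c(u,v)\phi(s)\phi(t)=c(u,v)\,d$, whence $\sigma_{ST}^2(s,t)/d^2=c(u,v)/(4\pi d)=\sigma^2(u,v)$.

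For the centring I would substitute (\ref{eqn:secondderfST}), together with its $t$-analogue, into $b_{ST}(s,t)=\tfrac12\big(\tfrac{\partial^2 f_{ST}}{\partial s^2}+\tfrac{\partial^2 f_{ST}}{\partial t^2}\big)(s,t)$ and divide by $d=\phi(s)\phi(t)$. The factor $\phi(s)\phi(t)$ cancels in every term, and, after putting back $s=\Phi^{-1}(u)$, $t=\Phi^{-1}(v)$ and grouping the two resulting $c(u,v)$ contributions, one obtains
\begin{multline*} \frac{b_{ST}(s,t)}{d}=\tfrac12\Big\{\tfrac{\partial^2 c}{\partial u^2}(u,v)\phi^2(\Phi^{-1}(u))+\tfrac{\partial^2 c}{\partial v^2}(u,v)\phi^2(\Phi^{-1}(v))\\ -3\Big(\tfrac{\partial c}{\partial u}(u,v)\Phi^{-1}(u)\phi(\Phi^{-1}(u))+\tfrac{\partial c}{\partial v}(u,v)\Phi^{-1}(v)\phi(\Phi^{-1}(v))\Big)\\ +c(u,v)\big(\{\Phi^{-1}(u)\}^2+\{\Phi^{-1}(v)\}^2-2\big)\Big\}, \end{multline*}
which is precisely $b(u,v)$ as given in (\ref{eqn:biaschat}), completing the identification.

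There is essentially no analytical obstacle here, because all the probabilistic content — absorbing the effect of the pseudo-observations via the empirical-copula representation (\ref{eqn:fSThat3}) and establishing the central limit theorem for $\hat{f}_{ST}$ — is already carried by Proposition \ref{thm:fShat}. The only delicate points are bookkeeping ones: checking that the smoothness hypothesis on $f_{ST}$ needed in Proposition \ref{thm:fShat} is inherited from Assumption \ref{ass:copdens} at interior points of $\Is$, and running the short but sign-sensitive algebra that turns $\tfrac12\big(\partial^2_s f_{ST}+\partial^2_t f_{ST}\big)/(\phi(s)\phi(t))$ into the stated expression for $b(u,v)$.
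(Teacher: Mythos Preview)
Your proposal is correct and follows essentially the same approach as the paper's own proof: apply Proposition \ref{thm:fShat} at the fixed point $(s,t)=(\Phi^{-1}(u),\Phi^{-1}(v))$, divide through by the deterministic constant $\phi(\Phi^{-1}(u))\phi(\Phi^{-1}(v))$, and then identify the resulting bias and variance via (\ref{eqn:fST}) and (\ref{eqn:secondderfST}). The only difference is that you spell out slightly more carefully why the smoothness hypothesis of Proposition \ref{thm:fShat} is met at interior points, which the paper leaves implicit.
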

\begin{proof} See Appendix. \end{proof}

\ppn It is seen that, when $(u,v)$ approaches one of the boundaries, both the (asymptotic) bias and variance of the estimator tend to grow unboundedly. Indeed, $\sigma^2(u,v) \propto c(u,v)/(\phi(\Phi^{-1}(u))\phi(\Phi^{-1}(v)))$ and $b(u,v)$ includes the term $c(u,v)  \left(\{\Phi^{-1}(u)\}^2+\{\Phi^{-1}(v)\}^2-2 \right)$, and the functions $\Phi^{-1}(\cdot)$ and $1/\phi(\Phi^{-1}(\cdot))$ are unbounded. Thus, along the boundaries, $\hat{c}^{(\tau)}$ will work properly only over areas, if any, where $c$ approaches 0 very smoothly. Otherwise, $\hat{c}^{(\tau)}$ will typically show a very erratic behaviour (large variance) and will be prone to exploding (large positive bias), especially in the corners of $\Is$ in which $c$ is large. Figure \ref{fig:naivecopdens} illustrates these problems, from a typical sample of size $n=1000$ simulated from the Gaussian copula with correlation $\rho=0.3$ (left panel). The corresponding naive probit-transformation kernel estimator is shown in the middle 
panel. An unconstrained matrix $\HH_{ST} $ was used in (\ref{eqn:fSThat})/(\ref{eqn:naivefeas}) and chosen by the multivariate Normal Reference rule \citep{Chacon11}. Here, this is optimal: $C$ being a Gaussian copula, $f_{ST}$ is a bivariate normal density. Over the middle of the unit square, the estimator appears to work decently, but towards the boundaries the estimate shows coarse folds and, indeed, hypertrophies the peaks in the corners $(0,0)$ an $(1,1)$. Clearly, this estimator is not acceptable as-is. It is, therefore, not surprising that it has been reported not to perform well, see for instance the simulation study in \cite{Bouezmarni11}.

\begin{figure}
\centering
\includegraphics[width=\textwidth, trim=0 1.5cm 0 0]{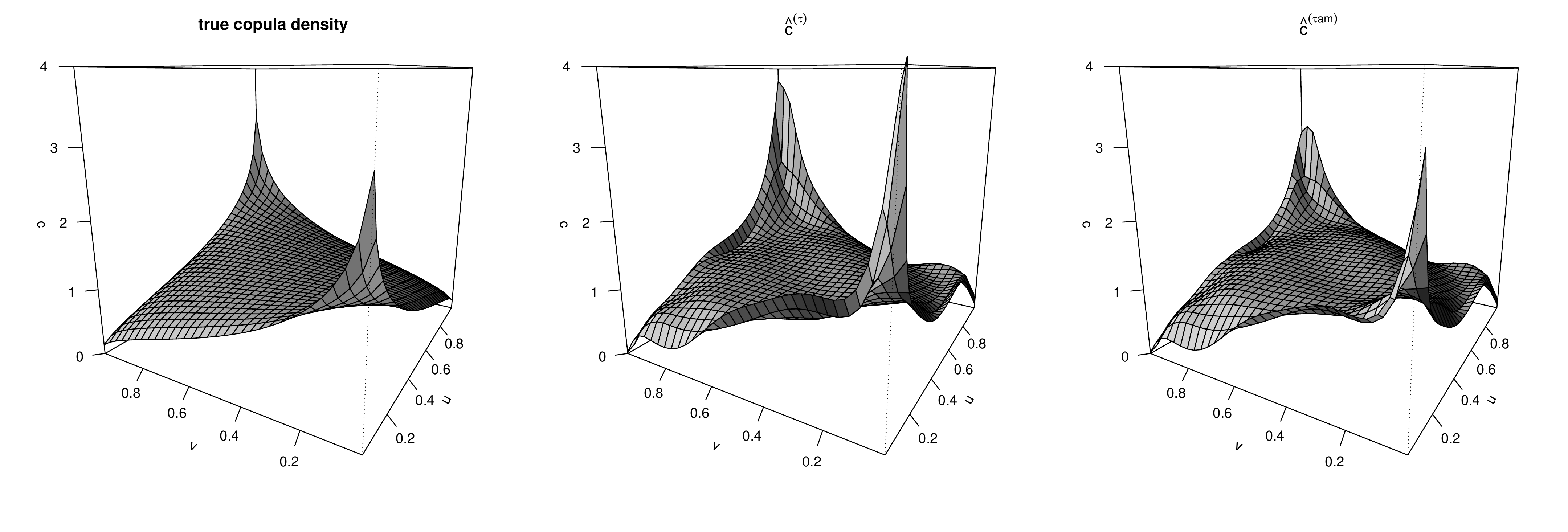}
\caption{True Gaussian copula density with $\rho=0.3$ (left), its naive probit-transformation kernel estimator from a typical random sample of size $n=1000$ (middle) and its amended naive probit-transformation kernel estimator from the same sample (right). The (unconstrained) bandwidth matrix $\HH_{ST}$ was chosen by the Normal Reference rule in the $(S,T)$-domain.}
\label{fig:naivecopdens}
\end{figure}

\ppn The third, unbounded term in (\ref{eqn:biaschat}) can, however, be easily adjusted for. Instead of (\ref{eqn:copratio}), take
\begin{equation} \hat{c}^{(\tau \text{am})}(u,v) = \frac{\hat{f}_{ST}(\Phi^{-1}(u),\Phi^{-1}(v))}{\phi(\Phi^{-1}(u))\phi(\Phi^{-1}(v))} \times \frac{1}{1+\frac{1}{2}h^2 \left(\{\Phi^{-1}(u)\}^2+\{\Phi^{-1}(v)\}^2-2 \right)}. \label{eqn:amendnaive} \end{equation}
For this `amended' version of $\hat{c}^{(\tau)}$, one can see that the asymptotic bias becomes proportional to
\begin{align*} b^{(\text{am})}(u,v)  = \frac{1}{2} &\left\{ \frac{\partial^2 c}{\partial u^2}(u,v)\phi^2(\Phi^{-1}(u))+ \frac{\partial^2 c}{\partial v^2}(u,v)\phi^2(\Phi^{-1}(v)) \right. \\ &  \left.  - 3\left( \frac{\partial c}{\partial u}(u,v) \Phi^{-1}(u)\phi(\Phi^{-1}(u))+ \frac{\partial c}{\partial v}(u,v) \Phi^{-1}(v)\phi(\Phi^{-1}(v))\right)\right\}. \end{align*}
In fact, the deterministic, multiplicative amendment in (\ref{eqn:amendnaive}) exactly makes it up for the third term in (\ref{eqn:biaschat}) in the asymptotic development, given that $(1+h^2)^{-1} = 1- h^2 +o(h^2)$ as $h \to 0$. The improvement is illustrated in Figure \ref{fig:naivecopdens} (right panel), where the amended version of the naive estimator computed on the same data set as in the middle panel is shown. The peaks in the corners $(0,0)$ and $(1,1)$ are now roughly of the right height. On the other hand, the wiggly appearance of the estimate along boundaries mostly remains, as the variance is not affected by the deterministic amendment. On a side note, the amendment implies that the estimator $\hat{c}^{(\tau \text{am})}(u,v)$ does not integrate to one over the unit square any more, which calls for a renormalisation such as $\hat{c}^{(\tau \text{am})}(u,v) \leftarrow \hat{c}^{(\tau \text{am})}(u,v) / \iint_{\Is} \hat{c}^{(\tau \text{am})}(u,v)\,du\,dv$. This is, however, 
frequent in other nonparametric density estimation procedures, and is not really a problem.

\subsection{Improved probit-transformation kernel copula density estimators} \label{sec:asymptpropimprov}

Now the asymptotic properties of the `improved' versions of the probit-transformation kernel copula density estimators are derived. Again, for convenience, the results are stated in the case where $\KK$ is a product of two univariate Gaussian kernels and $\HH_{ST} = h^2\II$, for some $h >0$, in (\ref{eqn:loclogpol}). The first version estimates the joint density $f_{ST}$ in the transformed domain by the local log-linear estimator $\tilde{f}_{ST}^{(1)}$. Consider first the `ideal' version $\tilde{f}_{ST}^{*(1)}$ of this estimator, computed on the genuine sample $\{(S_i,T_i),i=1,\ldots,n\}$. From \cite{Loader96}, one gets, for all $(s,t) \in \R^2$ at which $f_{ST}(s,t)$ is positive and admits continuous second-order partial derivatives, 
\begin{equation} \sqrt{nh^2}\left(\tilde{f}_{ST}^{*(1)}(s,t) - f_{ST}(s,t)-h^2 b^{(1)}_{ST}(s,t)\right)  \toL \Ns\left(0 , \left.\sigma^{(1)}_{ST}\right.^2(s,t) \right) , \label{eqn:normfSTtilde1} \end{equation}
where 
\[b^{(1)}_{ST}(s,t)  = \frac{1}{2}\left\{\left(\frac{\partial^2 f_{ST}}{\partial s^2}+\frac{\partial^2 f_{ST}}{\partial t^2}\right)(s,t) - \frac{1}{f_{ST}(s,t)}\,\left(\left\{\frac{\partial f_{ST}}{\partial s}\right\}^2+\left\{\frac{\partial f_{ST}}{\partial t}\right\}^2 \right)(s,t)\right\} \]
and $\left.\sigma^{(1)}_{ST}\right.^2(s,t) = \frac{f_{ST}(s,t)}{4\pi}$. Of course, if $f_{ST}(s,t)=0$ at some $(s,t)$, the singularity of the log-density cannot be accurately approximated by (\ref{eqn:locloglin}), but this is ruled out here by Assumption \ref{ass:copdens} which requires $c$ to be positive all over the interior of the unit square. By (\ref{eqn:fST}), this implies that $f_{ST}$ is positive over $\R^2$.

\ppn Define the `ideal' local log-linear probit-transformation kernel copula density estimator $\tilde{c}^{*(\tau, 1)}(u,v) = \tilde{f}_{ST}^{*(1)}(\Phi^{-1}(u),\Phi^{-1}(v))/\left(\phi(\Phi^{-1}(u)) \phi(\Phi^{-1}(v))\right)$. By the same token as for Theorem \ref{thm:chat}, in particular by using (\ref{eqn:fST}), (\ref{eqn:copratio}), (\ref{eqn:firstderfST}) and (\ref{eqn:secondderfST}) in (\ref{eqn:normfSTtilde1}), one can obtain
\begin{equation} \sqrt{nh^2}\left(\tilde{c}^{*(\tau, 1)}(u,v) - c(u,v)-h^2 b^{(1)}(u,v)\right)  \toL \Ns\left(0 , \left.\sigma^{(1)}\right.^2(u,v) \right) , \label{eqn:normctildestar1} \end{equation}
where 
\begin{multline} b^{(1)}(u,v) =  \frac{1}{2} \Bigg\{  \frac{\partial^2 c}{\partial u^2}(u,v)\phi^2(\Phi^{-1}(u))+ \frac{\partial^2 c}{\partial v^2}(u,v)\phi^2(\Phi^{-1}(v)) \\     
 -\frac{1}{c(u,v)}\left(\left\{\frac{\partial c}{\partial u}(u,v)\right\}^2\phi^2(\Phi^{-1}(u))+\left\{\frac{\partial c}{\partial v}(u,v)\right\}^2\phi^2(\Phi^{-1}(v)) \right) \\
 -\left(\frac{\partial c}{\partial u}(u,v) \Phi^{-1}(u) \phi(\Phi^{-1}(u)) + \frac{\partial c}{\partial v}(u,v) \Phi^{-1}(v) \phi(\Phi^{-1}(v)) \right)  -2c(u,v)  \Bigg\} \label{eqn:biasctilde1}  
\end{multline}
and $\displaystyle \left.\sigma^{(1)}\right.^2(u,v) = \frac{c(u,v)}{4\pi \phi(\Phi^{-1}(u))\phi(\Phi^{-1}(v))}$.

\ppn The next result ascertains that, like for the `naive' estimator, the asymptotic properties of $\tilde{c}^{(\tau,1)}$ are not affected by using pseudo-observations, and are consequently identical to those of the ideal version $\tilde{c}^{*(\tau,1)}$.

\begin{theorem} \label{thm:ctilde1} Under the assumptions of Proposition \ref{thm:fShat}, the `improved' local log-linear probit-transformation kernel copula density estimator $\tilde{c}^{(\tau,1)}$ at any $(u,v) \in (0,1)^2$ is such that
\begin{equation*} \sqrt{nh^2}\left(\tilde{c}^{(\tau,1)}(u,v) - c(u,v)-h^2 b^{(1)}(u,v)\right)  \toL \Ns\left(0 ,\left.\sigma^{(1)}\right.^2(u,v) \right) , \label{eqn:normctilde1} \end{equation*}
where $b^{(1)}(u,v)$ and $\left.\sigma^{(1)}\right.^2(u,v)$ are given above.  
\end{theorem}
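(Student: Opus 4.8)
The plan is to combine the known asymptotics of the ideal local log-linear estimator with an argument, parallel to the proof of Proposition \ref{thm:fShat}, showing that the pseudo-transformed sample \eqref{eqn:pseudosamp} may be substituted for the genuine sample without altering the first-order asymptotics. By the same algebra that takes \eqref{eqn:normfSTtilde1} to \eqref{eqn:normctildestar1} (plugging \eqref{eqn:firstderfST}--\eqref{eqn:secondderfST} and the deterministic rescaling $1/(\phphu\phphv)$ into the bias and variance, exactly as for Theorem \ref{thm:chat}), it suffices to prove that the feasible local log-linear density estimator $\tilde f_{ST}^{(1)}(s,t)$ built from $\{(\hat S_i,\hat T_i)\}$ satisfies \eqref{eqn:normfSTtilde1} with the same $b^{(1)}_{ST}$ and $\left.\sigma^{(1)}_{ST}\right.^2$ as its ideal counterpart $\tilde f_{ST}^{*(1)}$; evaluating at $(s,t)=(\Phi^{-1}(u),\Phi^{-1}(v))$ and dividing by the constant $\phphu\phphv$ then gives the claim.

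The next step is to reduce the behaviour of $\tilde f_{ST}^{(1)}$ to that of a finite vector of local weighted moments. Writing out the first-order conditions of the weighted maximum-likelihood problem \eqref{eqn:loclogpol} for $p=1$, the maximiser $\tilde\ab_1(s,t)$ solves $\Psi_h(\tilde\ab_1)=(\hat\psi_{00},h\hat\psi_{10},h\hat\psi_{01})$, where $\Psi_h$ is the deterministic vector of normalised integrals on the right-hand side of \eqref{eqn:loclogpol}, with Jacobian $-(\text{Hessian of the local log-likelihood})$, hence negative definite and locally invertible; consequently $\tilde f_{ST}^{(1)}(s,t)=\exp(\tilde a_{1,0}(s,t))$ is a smooth deterministic function of the data-dependent quantities
\[ \hat\psi_{jk}(s,t) = \frac{1}{nh^2}\sum_{i=1}^n \KK\left(\frac{s-\hat S_i}{h},\frac{t-\hat T_i}{h}\right)\left(\frac{\hat S_i-s}{h}\right)^j\left(\frac{\hat T_i-t}{h}\right)^k, \qquad j+k\le 1 \]
(for $p=2$ one would additionally need $j+k=2$). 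Each $\hat\psi_{jk}$ is itself a kernel-type statistic with kernel $\KK_{jk}(z_1,z_2)=z_1^j z_2^k\phi(z_1)\phi(z_2)$, which, the base kernel being product Gaussian, is again bounded, integrable and super-polynomially decaying, with the usual vanishing-odd-moment structure. As in \eqref{eqn:fSThat3} one has $\hat\psi_{jk}(s,t)=h^{-2}\iint \KK_{jk}\left(\frac{s-\Phi^{-1}(u)}{h},\frac{t-\Phi^{-1}(v)}{h}\right)d\hat C_n(u,v)$, so the entire machinery of Proposition \ref{thm:fShat} --- decomposition through the empirical copula process \eqref{eqn:limitempcop}, vanishing of the two extra `pseudo-observation' terms, and control of the boundary behaviour via Assumptions \ref{ass:FXY}--\ref{ass:copdens} --- carries over to each $\hat\psi_{jk}$. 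It yields a joint Gaussian limit for $\sqrt{nh^2}\,(\hat\psi_{jk}-\E\hat\psi_{jk})$ identical to the genuine-sample one (with $\hat\psi_{00}$ alone carrying the leading variance and the $j+k\ge 1$ moments feeding only the bias), together with $\E\hat\psi_{jk}$ agreeing with the genuine-sample mean up to $o(h^2)$. A multivariate Slutsky/delta-method step transports this through the smooth inversion, and matching constants with \cite{Loader96}'s formula \eqref{eqn:normfSTtilde1} identifies the limit.

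The main obstacle, as in Proposition \ref{thm:fShat}, is the pseudo-observation bookkeeping near the boundary of $\Is$: one must verify that appending the monomial factor $z_1^j z_2^k$ to the Gaussian kernel does not spoil any of the tail estimates used there --- in particular that $\KK_{jk}\big((s-\Phi^{-1}(u))/h,(t-\Phi^{-1}(v))/h\big)$ and its derivatives, integrated against $d\hat C_n$ and re-expressed through $\partial C/\partial u$ and $\partial^2 C/\partial u^2$ (which are only controlled by the blow-up rates of Assumption \ref{ass:cop}) as $u,v\to\{0,1\}$, still produce remainders of order $o\big((nh^2)^{-1/2}\big)$. The super-polynomial decay of $z\mapsto z^j\phi(z)$ makes this go through, but it is the step requiring genuine care. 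A secondary, purely routine obstacle is checking that substituting \eqref{eqn:firstderfST}--\eqref{eqn:secondderfST} into $b^{(1)}_{ST}(\Phi^{-1}(u),\Phi^{-1}(v))/(\phphu\phphv)$ reproduces exactly \eqref{eqn:biasctilde1}: the $\phi$-factors together with the $-(1/f_{ST})(\cdot)$ term must collapse to the stated $\phi^2$ weights, the derivative coefficients must simplify to the displayed constants, and so on; this is best done termwise and is tedious but straightforward.
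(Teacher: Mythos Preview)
Your proposal is essentially correct, but the route differs from the paper's. The paper exploits the explicit Hjort--Jones identity \eqref{eqn:corrslope},
\[
\tilde f^{(1)}_{ST}(s,t)=\hat f_{ST}(s,t)\exp\Big\{-\tfrac12 h^2\hat\psi_{ST}(s,t)\Big\},\qquad
\hat\psi_{ST}=\Big(\tfrac{\partial_s\hat f_{ST}}{\hat f_{ST}}\Big)^2+\Big(\tfrac{\partial_t\hat f_{ST}}{\hat f_{ST}}\Big)^2,
\]
and bounds $|\tilde f^{(1)}_{ST}-\tilde f^{*(1)}_{ST}|$ directly: the first piece is $|\hat f_{ST}-\hat f^*_{ST}|=o_P((nh^2)^{-1/2})$ from Proposition~\ref{thm:fShat}, and the second piece is $\tfrac12 h^2\hat f^*_{ST}\,|\hat\psi_{ST}-\hat\psi^*_{ST}|$, controlled by an analogue of Proposition~\ref{thm:fShat} for the derivative estimators $\partial_s\hat f_{ST}$, $\partial_t\hat f_{ST}$ (giving $o_P((nh^4)^{-1/2})$, hence $o_P(n^{-1/2})$ after the $h^2$ factor). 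This is shorter because it never unpacks the local-likelihood optimisation: one closed-form line replaces your score-equation inversion and delta method.

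Your approach via the local moments $\hat\psi_{jk}$ and a smooth inversion is the more general one --- it does not rely on a closed-form representation and scales immediately to $p=2$ --- and your diagnosis of the only nontrivial step (verifying that the Proposition~\ref{thm:fShat} machinery survives the polynomial factors $z_1^j z_2^k$ attached to the Gaussian kernel) is right. Both approaches ultimately reduce the pseudo-observation effect to the same ingredient: the empirical-copula decomposition applied to slightly modified Gaussian-type kernels, which works because those kernels remain Schwartz-class. The paper's shortcut is worth knowing, since it makes the $p=1$ case a two-line argument; your route is the one that actually underlies the $p=2$ proof (the paper only sketches that case, citing the bivariate analogue of Hjort--Jones (5.2)).
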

\begin{proof} See Appendix. \end{proof}

Compared to the `naive' estimator, the variance is the same but the bias is significantly different. It is now automatically free from any unbounded terms. In fact, \cite{Hjort96} showed (their expression (7.3)) that the local log-linear and the standard kernel estimators in the $(S,T)$-domain ($\tilde{f}^{(1)}_{ST}$ and  $\hat{f}_{ST}$, respectively) satisfy
\begin{equation} \tilde{f}^{(1)}_{ST}(s,t) = \hat{f}_{ST}(s,t) \exp\left\{-\frac{1}{2} h^2 \left[\left(\frac{\partial \hat{f}_{ST}(s,t)/\partial s}{\hat{f}_{ST}(s,t)} \right)^2 + \left(\frac{\partial \hat{f}_{ST}(s,t)/\partial t}{\hat{f}_{ST}(s,t)} \right)^2 \right]\right\}. \label{eqn:corrslope} \end{equation}
This closed-form expression for $\tilde{f}^{(1)}_{ST}$ shows that it improves on the basic kernel estimator by adjusting for the local slopes. With (\ref{eqn:copratio}), (\ref{eqn:impctilde}) and an analogue of (\ref{eqn:firstderfST}) for hat versions, one can state a similar result in terms of the copula density estimators:
\begin{align*} \tilde{c}^{(\tau,1)}(u,v) = \hat{c}(u,v) \\ \times \exp\Bigg\{-\frac{1}{2} h^2 \Bigg[&\left(\frac{\partial \hat{c}(u,v)/\partial u}{\hat{c}(u,v)} \right)^2 \phi^2(\Phi^{-1}(u)) + \left(\frac{\partial \hat{c}(u,v)/\partial v}{\hat{c}(u,v)} \right)^2 \phi^2(\Phi^{-1}(v)) \\ & -2 \left\{\left(\frac{\partial \hat{c}(u,v)/\partial u}{\hat{c}(u,v)} \right) \Phi^{-1}(u) \phi(\Phi^{-1}(u))  +\left(\frac{\partial \hat{c}(u,v)/\partial v}{\hat{c}(u,v)} \right) \Phi^{-1}(v) \phi(\Phi^{-1}(v))\right\} \\ 
& +\left\{\Phi^{-1}(u) \right\}^2 +\left\{\Phi^{-1}(v) \right\}^2 \Bigg]\Bigg\}. \end{align*}
This reveals that, not only the local log-linear estimator $\tilde{c}^{(\tau,1)}$ attempts a correction for the slopes of $c$ like in (\ref{eqn:corrslope}), it actively acts on the boundary behaviour as well. Indeed, given that $\phi^2(\Phi^{-1}(\cdot))$ and $\Phi^{-1}(\cdot)\phi(\Phi^{-1}(\cdot))$ tend to 0 towards 0 and 1, the first four terms in the bracket in the previous expression will have little influence towards the boundaries (provided $c$ does not tend to 0 too sharply there). On the other hand, $\left\{\Phi^{-1}(u) \right\}^2 +\left\{\Phi^{-1}(v) \right\}^2$ tends to $+\infty$ very fast along boundaries (and all the more in the corners), hence $\hat{c}(u,v)$ is multiplied by something quickly tending to 0 there and this prevents it from exploding. This is, in fact, very similar to what the amendment in (\ref{eqn:amendnaive}) attempted, but is now achieved automatically. Figure \ref{fig:loclikcopdensfix} (middle panel) shows the estimate $\tilde{c}^{(\tau,1)}$ for 
the data set used in Figure \ref{fig:naivecopdens}. It used the cross-validation criterion discussed in Section \ref{sec:bandwidth} to select the matrix $\HH_{ST}$ in (\ref{eqn:loclogpol}). 

\begin{figure}
\centering
\includegraphics[width=\textwidth, trim=0 1.5cm 0 0]{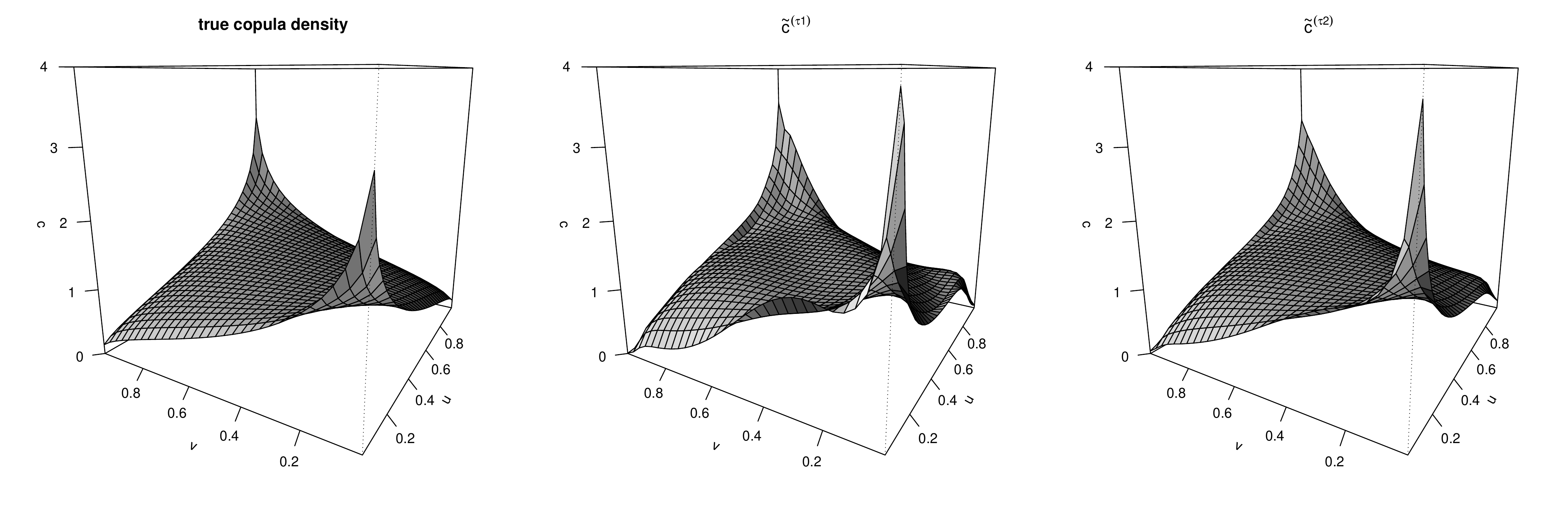}
\caption{True Gaussian copula density with $\rho=0.3$ (left), its local log-linear (middle) and log-quadratic (right) improved probit-transformation kernel estimators from the same sample ($n=500$) as in Figure \ref{fig:naivecopdens}. Both estimates use an unconstrained bandwidth matrix $\HH_{ST}$ chosen by cross-validation in the $(S,T)$-domain, see Section \ref{sec:bandwidth}.}
\label{fig:loclikcopdensfix}
\end{figure}

\ppn The second improved probit-transformation estimator is obtained when taking $p=2$ in (\ref{eqn:loclogpol}). Again, consider first the `ideal' estimator $\tilde{f}^{*(2)}_{ST}$, computed on the genuine sample $\{(S_i,T_i);i=1,\ldots,n\}$. Locally fitting a polynomial of a higher degree is known to reduce the asymptotic bias of the estimator, here from order $O(h^2)$ to order $O(h^4)$ \citep{Loader96,Hjort96}, sufficient smoothness of $f_{ST}$ permitting. Specifically, if $f_{ST}$ admits continuous fourth-order partial derivatives and is positive at $(s,t)$, then 
\begin{equation} \sqrt{nh^2}\left(\tilde{f}_{ST}^{*(2)}(s,t) - f_{ST}(s,t)-h^4 b^{(2)}_{ST}(s,t)\right)  \toL \Ns\left(0 , \left.\sigma^{(2)}_{ST}\right.^2(s,t) \right) , \label{eqn:normfSTtilde2} \end{equation}
where $\displaystyle \left.\sigma^{(2)}_{ST}\right.^2(s,t) = \frac{5}{2} \, \frac{f_{ST}(s,t)}{4\pi}$ and
\begin{multline*} b^{(2)}_{ST}(s,t)  = -\frac{1}{8}f_{ST}(s,t) \\ \times \left\{\left(\frac{\partial^4 g}{\partial s^4}+\frac{\partial^4 g}{\partial t^4}\right) +4\left( \frac{\partial^3 g}{\partial s^3} \frac{\partial g}{\partial s} + \frac{\partial^3 g}{\partial t^3} \frac{\partial g}{\partial t} + \frac{\partial^3 g}{\partial s^2 \partial t} \frac{\partial g}{\partial t} + \frac{\partial^3 g}{\partial s \partial t^2} \frac{\partial g}{\partial s}\right) + 2 \frac{\partial^4 g}{\partial s^2 \partial t^2} \right\}(s,t), \end{multline*}
with $g(s,t) = \log f_{ST}(s,t)$. Starting from $g(s,t) = \log c(\Phi(s),\Phi(t)) + \log \phi(s) + \log \phi(t)$, tedious algebraic differentiation provides all partial derivatives of $g$ up to order four in terms of $c$ and its partial derivatives up to order four. Naturally, $c$ will be assumed to admit continuous fourth-order partial derivatives.
\begin{assumption} \label{ass:cop4} The copula density $c(u,v) =(\partial^2 C/\partial u \partial v)(u,v)$ admits continuous fourth-order partial derivatives on the interior of the unit square $\Is$.
\end{assumption} 

As previously, it readily follows from (\ref{eqn:normfSTtilde2}) that
\begin{equation*} \sqrt{nh^2}\left(\tilde{c}^{*(\tau, 2)}(u,v) - c(u,v)-h^4 b^{(2)}(u,v)\right)  \toL \Ns\left(0 , \left.\sigma^{(2)}\right.^2(u,v) \right) , \label{eqn:normctildestar2} \end{equation*}
where $\displaystyle \left.\sigma^{(2)}\right.^2(u,v) = \frac{5}{2} \, \frac{c(u,v)}{4\pi \phi(\Phi^{-1}(u))\phi(\Phi^{-1}(v))}$  and $b^{(2)}(u,v)$ is an expression of the same type as (\ref{eqn:biasctilde1}), this time involving the partial derivatives of $c$ up to the fourth order. Like above, it can be shown that resorting to pseudo-observations does not affect these properties. This, however, requires a condition on the bandwidth ($h \sim n^{-a}, a \in (0,1/6)$) slightly stronger than previously. Given that the bias order is reduced to $O(h^4)$, the optimal bandwidth order is now seen to be $h \sim n^{-1/10}$, so that the bandwidth requirement does still include that optimal order.

\begin{theorem} \label{thm:ctilde2} Under the assumptions of Proposition \ref{thm:fShat} and Assumption \ref{ass:cop4}, if $h \sim n^{-a}$ with $a \in (0,1/6)$ as $n \to \infty$, the `improved' local log-quadratic probit-transformation kernel copula density estimator $\tilde{c}^{(\tau,2)}$ at any $(u,v) \in (0,1)^2$ is such that
\begin{equation*} \sqrt{nh^2}\left(\tilde{c}^{(\tau,2)}(u,v) - c(u,v)-h^4 b^{(2)}(u,v)\right)  \toL \Ns\left(0 ,\left.\sigma^{(2)}\right.^2(u,v) \right) , \label{eqn:normctilde2} \end{equation*}
where $b^{(2)}(u,v)$ and $\left.\sigma^{(2)}\right.^2(u,v)$ are described above.  
\end{theorem}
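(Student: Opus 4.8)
The proof runs along the lines already laid out for Proposition \ref{thm:fShat} and Theorems \ref{thm:chat}--\ref{thm:ctilde1}. Everything is reduced, through the deterministic back-transformation (\ref{eqn:impctilde}), to an asymptotic statement about the local log-quadratic density estimator $\tilde f^{(2)}_{ST}$ in the $(S,T)$-domain. For the \emph{ideal} estimator $\tilde f^{*(2)}_{ST}$ built on the genuine sample $\{(S_i,T_i)\}$, expansion (\ref{eqn:normfSTtilde2}) of \cite{Loader96} and \cite{Hjort96} applies: Assumption \ref{ass:cop4}, together with $g(s,t)=\log c(\Phi(s),\Phi(t))+\log\phi(s)+\log\phi(t)$ and (\ref{eqn:fST}), ensures that $g=\log f_{ST}$ has continuous fourth-order partial derivatives, while the boundedness of $f_{ST}$ and of the relevant partial derivatives, needed for the remainder control, follows as in Lemma \ref{lem:unicont} once Assumption \ref{ass:cop4} is in force. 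Substituting (\ref{eqn:fST}), (\ref{eqn:firstderfST})--(\ref{eqn:secondderfST}) and their third- and fourth-order analogues into (\ref{eqn:normfSTtilde2}), and dividing by $\phi(\Phi^{-1}(u))\phi(\Phi^{-1}(v))$ as in (\ref{eqn:copratio}), gives the stated limit for $\tilde c^{*(\tau,2)}$, with $\sigma^{(2)2}(u,v)=\tfrac{5}{2}\, c(u,v)/\big(4\pi\phi(\Phi^{-1}(u))\phi(\Phi^{-1}(v))\big)$ and $b^{(2)}(u,v)$ the lengthy expression in the partial derivatives of $c$ up to order four obtained from the tedious differentiation already announced. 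Thus only one thing remains: for each fixed $(s,t)=(\Phi^{-1}(u),\Phi^{-1}(v))$,
\[ \tilde f^{(2)}_{ST}(s,t) - \tilde f^{*(2)}_{ST}(s,t) = o_{\P}\big((nh^2)^{-1/2}\big). \]

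Unlike the log-linear case, $\tilde f^{(2)}_{ST}$ admits no closed form comparable to (\ref{eqn:corrslope}), so the plan is to linearise the defining weighted-likelihood equations (\ref{eqn:loclogpol}). Writing the normal equations for $p=2$, the maximiser $\tilde{\ab}_2(s,t)$ --- and hence $\tilde f^{(2)}_{ST}(s,t)=\exp(\tilde a_{2,0}(s,t))$ --- is a smooth function, near its population value, of a fixed finite collection of kernel-weighted local moments
\[ \hat m_{jk}(s,t)=\frac{1}{nh^2}\sum_{i=1}^n \phi\!\left(\frac{s-\hat S_i}{h}\right)\phi\!\left(\frac{t-\hat T_i}{h}\right)\left(\frac{\hat S_i-s}{h}\right)^{\!j}\left(\frac{\hat T_i-t}{h}\right)^{\!k} \]
(together with deterministic integrals of the Gaussian kernel against the fitted log-polynomial, which have closed form). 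By a Taylor expansion of this smooth map around the genuine-sample moments $m^*_{jk}(s,t)$, it suffices to prove $\hat m_{jk}(s,t)-m^*_{jk}(s,t)=o_{\P}\big((nh^2)^{-1/2}\big)$ for every index appearing. Exactly as in (\ref{eqn:fSThat3}), each $\hat m_{jk}$ is an integral $\iint_{\R^2}\Psi_{jk}\big(\tfrac{s-\Phi^{-1}(u)}{h},\tfrac{t-\Phi^{-1}(v)}{h}\big)\,d\hat C_n(u,v)$ of a Gaussian-times-polynomial weight $\Psi_{jk}$ against the empirical copula (\ref{eqn:empcop}), and $m^*_{jk}$ is the same integral with the genuine empirical copula.

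It then remains to rerun, for these derivative-type weights, the estimates carried out for $\hat f_{ST}$ in the proof of Proposition \ref{thm:fShat}. Writing $\C_n=\sqrt n(\hat C_n-C)$ and using the weak convergence (\ref{eqn:limitempcop}), the discrepancy $\hat m_{jk}-m^*_{jk}$ is governed by the two extra drift terms $\partial_uC(u,v)\,\B_C(u,1)$ and $\partial_vC(u,v)\,\B_C(1,v)$ of $\G_C$; the structural point, already used in Proposition \ref{thm:fShat}, is that these terms are degenerate --- a smooth factor times a process of a single coordinate --- so that an integration by parts in $u$ and $v$, transferring the derivatives onto $\Psi_{jk}\big(\tfrac{s-\Phi^{-1}(u)}{h},\tfrac{t-\Phi^{-1}(v)}{h}\big)$ (which produces factors $1/(h\,\phi(\Phi^{-1}(\cdot)))$, bounded on compacta and kept under control near the boundary by Assumptions \ref{ass:cop}--\ref{ass:copdens}), makes their contribution of strictly smaller order than $(nh^2)^{-1/2}$ provided $h\sim n^{-a}$ with $a\in(0,1/6)$. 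Relative to the condition $a\in(0,1/4)$ of Proposition \ref{thm:fShat}, the stronger bandwidth requirement is precisely what is needed to absorb the additional powers of $h^{-1}$ carried by the weights $\Psi_{jk}$ when the local second-order coefficients are estimated. This last point is the main obstacle: there is no closed-form shortcut as in (\ref{eqn:corrslope}), so one must linearise the implicit log-quadratic fit and then verify, by careful bookkeeping of the Proposition \ref{thm:fShat} bounds carried two derivatives further and in the presence of the $\Phi^{-1}$-induced singularities near $\partial\Is$, that the empirical-copula ``price to pay'' still vanishes at the required rate; the complementary pieces --- that the $\B_C$-part of $\C_n$ reproduces the ideal bias $h^4 b^{(2)}$ and the variance $\tfrac{5}{2}\, f_{ST}(s,t)/(4\pi)$, and the final division by $\phi(\Phi^{-1}(u))\phi(\Phi^{-1}(v))$ --- are routine and identical to the earlier proofs.
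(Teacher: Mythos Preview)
Your overall plan is sound, but it rests on a false premise and then blurs the one point the paper singles out as the crux. You assert that ``unlike the log-linear case, $\tilde f^{(2)}_{ST}$ admits no closed form comparable to (\ref{eqn:corrslope})'' and therefore propose to linearise the normal equations for (\ref{eqn:loclogpol}) and control the resulting local moments $\hat m_{jk}$. In fact such a closed form does exist: the paper's proof is explicitly ``based on the analogue of (\ref{eqn:corrslope}) for $\tilde f^{(2)}_{ST}$ (the bivariate version of equation (5.2) in \cite{Hjort96})'', and then simply reruns the argument of Theorem \ref{thm:ctilde1}. So the paper's route is shorter than yours, not longer.

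More importantly, your explanation of why the bandwidth window shrinks from $a\in(0,1/4)$ to $a\in(0,1/6)$ is vague. You attribute it to ``additional powers of $h^{-1}$ carried by the weights $\Psi_{jk}$'', but with the normalisation you wrote, each $\hat m_{jk}$ has the same variance order $O((nh^2)^{-1})$ as $\hat f_{ST}$ itself, and the Proposition \ref{thm:fShat} machinery would give $\hat m_{jk}-m^*_{jk}=o_{\P}((nh^2)^{-1/2})$ already under $a<1/4$; the extra restriction does not come from the weights. In the paper's approach the source is transparent: the Hjort--Jones closed form for $p=2$ makes the analogue of (\ref{eqn:psi}) depend on the \emph{second-order} partial derivatives of $\hat f_{ST}$, and matching pseudo- and genuine-sample versions of those requires $nh^6\to\infty$, i.e.\ $a<1/6$. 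Your normal-equations route can be made to work, but you would need to track how the map from $(\hat m_{jk})$ to $\tilde a_{2,0}$ scales with $h$ (the quadratic coefficients bring in factors equivalent to second-derivative estimation), and that is precisely where the $nh^6\to\infty$ condition re-emerges; as written, your bookkeeping does not locate it.
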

\begin{proof} See Appendix.
\end{proof}

For seek of conciseness, the expression of $b^{(2)}(u,v)$ is not given here (it is made up of several dozens of terms). The interesting point about it, though, is that, unlike (\ref{eqn:biasctilde1}) which shows a last term $-c(u,v)$, all terms of $b^{(2)}(u,v)$ are proportional to $\{\Phi^{-1}(u)\}^\alpha\{\Phi^{-1}(v)\}^\beta \phi^\gamma(\Phi^{-1}(u))\phi^\delta(\Phi^{-1}(v))$, for some non-negative integral powers $\alpha, \beta,\gamma,\delta$, and all those functions tend to 0 as $u,v \to 0/1$. Hence, $b^{(2)}(u,v)$ may actually tend to 0 towards the boundaries, and the bias there be actually of order $o(h^4)$. Again, this will be the case where $c$ does not tend to 0 too sharply when approaching the boundary. The expression of the variance is the same as that for $\hat{c}^{(\tau)}(u,v)$ and $\tilde{c}^{(\tau,1)}(u,v)$, except that it has been inflated by a factor $5/2$. This inflation factor is, also, a well-known  feature in local polynomial modelling when fitting a higher-degree polynomial \citep[
Section 3.3.1]{Fan96}.

\ppn Interestingly, {\it ad-hoc} techniques for reducing the bias of kernel estimators from $O(h^2)$ to $O(h^4)$, e.g.\ higher-order kernels or multiplicative adjustment, have long been an active research topic \citep{Jones97}. Yet, few of those methods have actually taken hold. The main reason is that the demonstrated improvement is an asymptotic result, which usually goes unnoticed for sample sizes one typically has in practice while implying interpretability issues (e.g.\ negative density estimates when using higher-order kernels) and computational burden. It is, therefore, worth stressing that here combining probit transformation and local log-quadratic density estimation in the $(S,T)$-domain achieves that bias reduction with no real extra complications compared to other estimators and fixes the boundary bias in an automatic way. Furthermore, these improvements are visible even in moderately large sample size, as the simulation study in Section \ref{sec:sim} will show. The most obvious and practically 
relevant effect of this bias reduction is that a larger bandwidth can be used without oversmoothing. This results in smoother estimates, visually more pleasant. This is clear in Figure \ref{fig:loclikcopdensfix} (right panel), where $\tilde{c}^{(\tau,2)}$ is shown for the same data set as previously. Again, the bandwidth matrix in (\ref{eqn:loclogpol}) was chosen via the cross-validation method suggested in Section \ref{sec:bandwidth}.

\subsection{Improved probit-transformation kernel copula density estimators with $k$-NN bandwidth} \label{subsec:kNN}

Theorems \ref{thm:ctilde1} and \ref{thm:ctilde2} reveal that the combination of probit transformation and local likelihood methods mostly cures the boundary bias problems for kernel copula density estimation. However, the fact remains that the suggested estimators have a variance behaving like
\begin{equation} \var(\tilde{c}^{(\tau,p)}(u,v)) = C_p \ \frac{c(u,v)}{4\pi nh^2 \phi(\Phi^{-1}(u))\phi(\Phi^{-1}(v))}  + o((nh^2)^{-1}), \label{eqn:varinf} \end{equation} 
where $C_1 = 1$ and $C_2 = 5/2$, as $n \to \infty$. Hence, $\var(\tilde{c}^{(\tau,p)}(u,v))$ tends to grow unboundedly when $(u,v)$ approaches one of the boundaries. Note that this is also the case for other copula density estimators attempting to correct the boundary bias, see for instance \cite[Chapter 4]{Blumentritt12} and \cite{Janssen13} who obtain similar unbounded boundary variance for the Beta kernel and the Bernstein estimators. 

\ppn Facing the same situation in the univariate case, \cite{Geenens13} suggested to use $k$-Nearest-Neighbor ($k$-NN) type bandwidth in the transformed domain. Although barely used for standard kernel density estimation, $k$-NN bandwidths appeared totally appropriate in \cite{Geenens13}'s framework and, indeed, managed to stabilise the variance towards the boundaries. This is also the case in this setting as can be understood heuristically as follows. 

\ppn  Again, assume that the smoothing matrix $\HH_{ST}$ in (\ref{eqn:loclogpol}) is diagonal, but instead of taking $\HH_{ST} = h^2\II$ for some fixed value $h$, take a local smoothing matrix defined as $\Hs^{(k)}_{ST}(s,t) = D^2_k(s,t) \II$, where $D_k(s,t)$ is the Euclidean distance between $(s,t)$ and the $k$th closest observation out of the sample (\ref{eqn:pseudosamp}) in $\R^2$. Now it is $k$, or equivalently $\alpha = k/n$, that will play the role of the smoothing parameter in lieu of $h$. If $\KK$ had a compact support, $\alpha$ would be the proportion of observations actively entering the estimation of $f_{ST}$ at any $(s,t)$ -- the interpretation roughly holds for the Gaussian kernel as well. 
Of course, $D_k(s,t)$ depends on the sample and is a random quantity. Along the same lines as in \cite{Mack79}, one can show that  $\E\left(1/D_k(s,t)\right) \simeq \frac{\pi f_{ST}(s,t)}{\alpha}$ and, together with $\var(\tilde{f}_{ST}^{(p)}(s,t)|D_{k}(s,t)) \simeq C_p \ \frac{f_{ST}(s,t)}{4\pi nD_{k}(s,t)}$, that
\begin{equation*} \var(\tilde{f}_{ST}^{(p)}(s,t)) \simeq C_p \ \frac{f^2_{ST}(s,t)}{4 n\alpha}.  \label{eqn:varknn2}\end{equation*}
Now, through (\ref{eqn:copratio}), one directly gets, for all $(u,v) \in (0,1)^2$,
\[\var(\tilde{c}^{(\tau,p)}(u,v)) \simeq C_p \ \frac{c^2(u,v)}{4 n\alpha}.   \]
Unlike (\ref{eqn:varinf}), this is no more proportional to $1/\{\phi(\Phi^{-1}(u))\phi(\Phi^{-1}(v))\}$ which grows unboundedly towards boundaries. This results in estimates more stable and much smoother towards the borders of $\Is$. 

\ppn In fact, given that the (long) tails of $\tilde{f}^{(p)}_{ST}$ in the transformed domain becomes the (short) boundary regions of $\tilde{c}^{(\tau,p)}$ in $\Is$ through the compressing back-transformation $(u=\Phi(s),v=\Phi(t))$, $\tilde{f}^{(p)}_{ST}$ must have {\it very smooth} tails in $\R^2$ to produce suitably smooth boundary behaviour for $\tilde{c}^{(\tau,p)}$. This is exactly what is achieved by using a $k$-NN bandwidth in the $(S,T)$-domain: local likelihood density estimators using $k$-NN bandwidth are, indeed, known to produce smoother estimates in the tails than their fixed-bandwidth counterparts, avoiding the occurrence of `spurious bumps'. Hence the appropriateness of the method here.

\ppn This is illustrated in Figure \ref{fig:loclikcopdensknn}. Again, the previous simulated data set was used to produce the two `improved' probit-transformation kernel copula density estimates shown in the middle (local log-linear) and the right panel (local log-quadratic), but this time a $k$-NN-type unconstrained bandwidth matrix $\Hs^{(k)}_{ST}(s,t)$ was used (see Section 4 for details). Compared to Figure \ref{fig:loclikcopdensfix}, the estimates are much smoother along the boundaries now. For instance, using a $k$-NN-bandwidth mostly corrects the kink previously observed in the $(1,0)$ corner.  It is particularly clear for $\tilde{c}^{(\tau,2)}$. The value of $\alpha$ selected for the case $p=1$ was 0.1871, that for the case $p=2$ was 0.4976. Again, the bias order reduction implied by local log-quadratic modelling allows a larger smoothing parameter to be used. As a result, this estimate $\tilde{c}^{(\tau,2)}$ (right panel) has a smooth and visually pleasant appearance, but without oversmoothing. In 
fact, it is barely distinguishable from the true copula density (left panel). It happens that the estimator $\tilde{c}^{(\tau,2)}$, when used in conjunction with a $k$-NN-type bandwidth, is strikingly good at recovering the shape of the underlying copula density while maintaining a visually pleasant amount of smoothness, see also Section \ref{sec:realdat}. 


\begin{figure}
\centering
\includegraphics[width=\textwidth, trim=0 1.5cm 0 0]{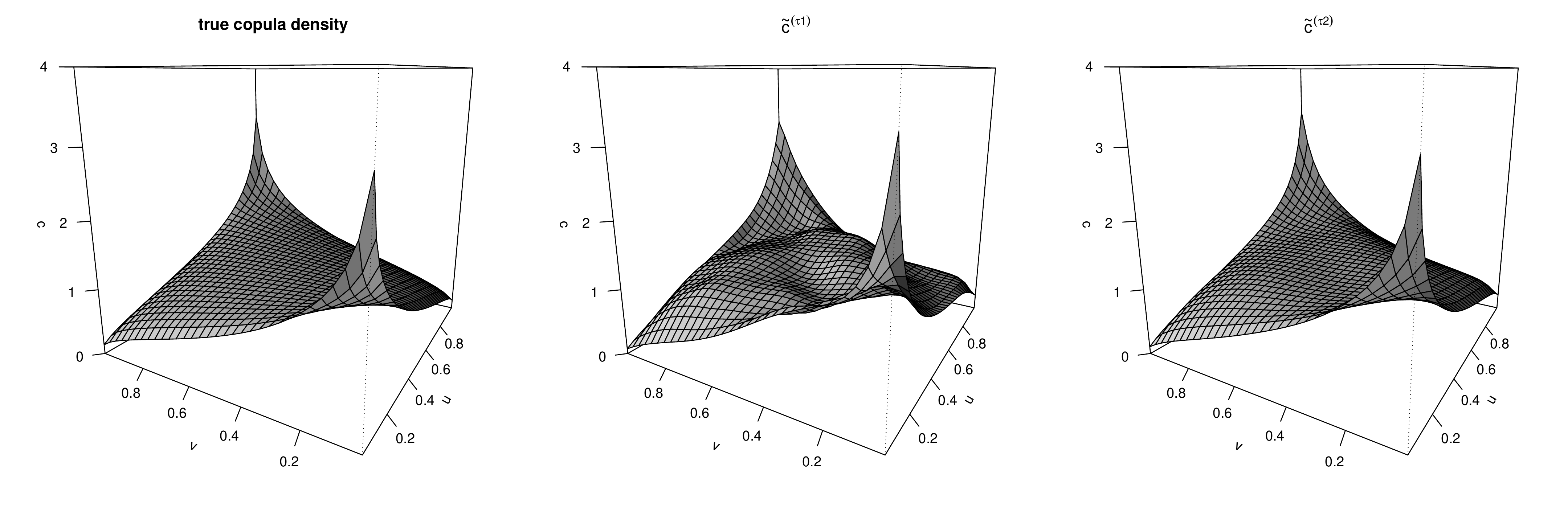}
\caption{True Gaussian copula density with $\rho=0.3$ (left), its local log-linear (middle) and log-quadratic (right) improved probit-transformation kernel estimators from the same sample ($n=500$) as in Figure \ref{fig:naivecopdens}. Both estimates use an unconstrained bandwidth matrix of type $k$-NN chosen by cross-validation in the $(S,T)$-domain, see Section \ref{sec:bandwidth}.}
\label{fig:loclikcopdensknn}
\end{figure}

\section{Bandwidth choice} \label{sec:bandwidth}

The behaviour of kernel estimators is known to be crucially dependent on their smoothing parameter, whose choice in practice is unanimously recognised as a very difficult problem, especially in more than one dimension. Here an effective way for selecting a suitable bandwidth matrix \[\HH_{ST} = \begin{pmatrix} h_1^2 & h_{12} \\ h_{12} & h^2_2 \end{pmatrix} \]
in (\ref{eqn:loclogpol}) is suggested. It can be understood that the diagonal elements $h^2_{1}$ and $h^2_{2}$ of $\HH_{ST}$ quantify the amount of smoothing applied in the directions of the main $s$- and $t$-axes, hence their values will drive the overall smoothness of the resulting estimate $\tilde{f}_{ST}^{(p)}$ and eventually that of $\tilde{c}^{(\tau,p)}$. On the other hand, $h_{12}$ sets the direction along which that smoothing mostly takes place in $\R^2$. For instance, if $\KK$ is the bivariate Gaussian kernel, the local weights around $(s,t) \in \R^2$ will be set by the elliptical contour lines of the $\Ns_2\left((s,t)^t,\HH_{ST}\right)$-distribution. If $f_{ST}$ is stretched along a particular direction of $\R^2$, which will be the case if $c$ itself is so on the unit square, it is greatly beneficial to the estimator that smoothing be applied in that direction \citep{Duong05}, and so $h_{12}$ should be selected accordingly. If this is not the case, in particular if $S$ and $T$ are uncorrelated, 
then $h_{12}$ may be set to 0. This motivates to separate the problem of selecting $h_1$ and $h_2$ from that of selecting $h_{12}$. The idea developed here looks for achieving this, in a way close in spirit to pre-sphering the observations \citep[Section 4.6]{Wand95}.

\ppn Consider the principal components decomposition of the $(n \times 2)$-`data matrix' $\Xi \doteq (\hat{S}_i,\hat{T}_i)_{i=1}^n$. By construction, the $\hat{S}_i$'s and $\hat{T}_i$'s are centred, hence $\hat{Q}_i$ and $\hat{R}_i$, the score of the $i$th observation on the first and second principal components, are given by
\begin{equation} \hat{Q}_i = W_{11}\hat{S}_i + W_{12} \hat{T}_i, \qquad \hat{R}_i = W_{21}\hat{S}_i + W_{22} \hat{T}_i, \label{eqn:PCA} \end{equation}
where $W_1=(W_{11},W_{12})^t$ and $W_2=(W_{21},W_{22})^t$ are the eigenvectors of $\Xi^T \Xi$. Given that the transformation 
\begin{equation} \binom{Q}{R}  = \begin{pmatrix} W_{11} & W_{12} \\ W_{21} & W_{22} \end{pmatrix} \binom{S}{T} \doteq  \WW \binom{S}{T} \label{eqn:linequiv} \end{equation}
is only a linear reparametrisation of $\R^2$, an estimate of $f_{ST}$ can be readily obtained from an estimate of the density of $(Q,R)$, say $f_{QR}$. In addition, it is well known that the samples $\{\hat{Q}_i\}$ and $\{\hat{R}_i\}$ are uncorrelated, hence estimating $f_{QR}$ via any kernel method from the sample $\{(\hat{Q}_i,\hat{R}_i);i=1,\ldots,n\}$ can be based on a diagonal bandwidth matrix $\HH_{QR} = \diag(h_Q^2,h_R^2)$ with little side effect. An idea is then to select $h_Q$ and $h_R$ independently via univariate procedures. Denote $\tilde{f}^{(p)}_Q$ and $\tilde{f}^{(p)}_R$ ($p=1,2$), the local log-polynomial estimators for the density of $Q$ and $R$, respectively, based on the samples $\{\hat{Q}_i\}$ and $\{\hat{R}_i\}$ (see equations (6) and (7) in \cite{Loader96}). Of course, $\tilde{f}^{(p)}_Q$ only depends on one bandwidth $h_Q$ and $\tilde{f}^{(p)}_R$ only depends on another bandwidth $h_R$. Then, $h_Q$ can be selected via cross-validation \citep[Section 5.3.3]{Loader99} as 
\begin{equation} h_Q = \arg \min_{h >0} \left\{\int_{-\infty}^\infty \left\{\tilde{f}^{(p)}_Q(q)\right\}^2\,dq - \frac{2}{n} \sum_{i=1}^n \tilde{f}^{(p)}_{Q(-i)}(\hat{Q}_i) \right\}, \label{eqn:hcv} \end{equation}
where, as usual in cross-validation procedures, $\tilde{f}^{(p)}_{Q(-i)}$ is the `leave-one-out' version of $\tilde{f}^{(p)}_{Q}$ computed on all the observations but $\hat{Q}_i$. The value of $h_R$ can be found similarly, and $h_Q$ and $h_R$ can be plugged into $\HH_{QR}$ for proceeding to bivariate estimation. However, optimal bandwidths for univariate density estimation are usually smaller than those for bivariate density estimation of $f_{QR}$. For the case $p=1$ (local log-linear estimator), the asymptotic optimal bandwidth order is $n^{-1/5}$ for univariate density estimation and $n^{-1/6}$ in two dimensions. For the case $p=2$ (local log-quadratic estimator), the asymptotic optimal bandwidth order is $n^{-1/9}$ for univariate density estimation and $n^{-1/10}$ in two dimensions. Hence, a fair choice for the bandwidth matrix for estimating $f_{QR}$ seems to be 
\[\HH_{QR} = K_n^{(p)}\begin{pmatrix} h_Q^2 & 0 \\ 0 & h^2_R \end{pmatrix}, \]
with $h_Q$ and $h_R$ the two bandwidths found above by (univariate) cross-validation, and $K^{(1)}_n = n^{1/15}$ in the local log-linear case and $K^{(2)}_n = n^{1/45}$ in the local log-quadratic case. The estimate of $f_{ST}$ can finally be obtained by linear back-transformation of the estimate of $f_{QR}$ from the $(Q,R)$-domain to the $(S,T)$ domain. It must be noted, though, that, again due to (\ref{eqn:linequiv}), this exactly amounts to directly estimating $f_{ST}$ from $\{(\hat{S}_i,\hat{T}_i);i=1,\ldots,n\}$ using the bandwidth matrix
\begin{equation*} \HH_{ST} = \WW^{-1} \begin{pmatrix} h_Q^2 & 0 \\ 0 & h^2_R \end{pmatrix}  \WW^{-1}. \label{eqn:CV} \end{equation*}

\ppn When the estimator is to be built on a $k$-NN-type bandwidth matrix, the procedure is very similar. The data are transformed into the sample $\{(\hat{Q}_i,\hat{R}_i);i=1,\ldots,n\}$ via (\ref{eqn:PCA}). Then, a suitable value of $\alpha$ in the $Q$-direction is computed as 
\begin{equation} \alpha_Q = \arg \min_{\alpha \in (0,1)} \left\{\int_{-\infty}^\infty \left\{\tilde{f}^{(p)}_Q(q)\right\}^2\,dq - \frac{2}{n} \sum_{i=1}^n \tilde{f}^{(p)}_{Q(-i)}(\hat{Q}_i) \right\}, \label{eqn:alphacv} \end{equation}
i.e.\ exactly as in (\ref{eqn:hcv}) but this time $\tilde{f}^{(p)}_Q$ is the estimator based on the $k$-NN bandwidth $\alpha$. The value $\alpha_R$ is computed in the same way. Denote $\kappa = \alpha_Q/\alpha_R$. Then, the squared norm of a vector in the $(Q,R)$-domain will be computed as 
\begin{equation} \|(q,r)\|^2  = q^2 + \kappa^2 r^2. \label{eqn:distkappa} \end{equation}
The factor $\kappa^2$ naturally adjusts, through the obtained close-to-optimal values of $\alpha_Q$ and $\alpha_R$, for the potential discrepancy in local geometry in the $q$- and $r$-directions. The bivariate estimation of $f_{QR}$ at any $(q,r) \in \R^2$ is carried out using the $k = K_n^{(p)} \times \alpha_Q \times n $ nearest neighbours of $(q,r)$, these being determined by the above distance. Here, $K_n^{(1)} = n^{-2/15}$ and $K_n^{(2)} = n^{-4/45}$, again for accounting for the difference in optimal $\alpha$-orders in one and two dimensions. Finally, the estimate of $f_{ST}$ is obtained by inverse linear transformation or, as set out in the fixed-bandwidth case, directly from the sample $\{(\hat{S}_i,\hat{T}_i);i=1,\ldots,n\}$ using an appropriate Mahalanobis-like distance. The main difference is that here, the employed distance makes use, through $\kappa$ in (\ref{eqn:distkappa}), of relevant information in terms of optimal smoothing, not only in terms of the covariance structure of $\{(\hat{S}_i,\hat{
T}_i);i=1,\ldots,n\}$ like the usual Mahalanobis distance. In this setting, the `smoothing parameters' vector is, therefore, $(\alpha_Q,\kappa)$.

\ppn It is acknowledged that this procedure may lack of sufficient theoretical support. For instance, it is known that pre-sphering the observations in the process of selecting the bandwidth matrix is justified only if the underlying density $f_{ST}$ is bivariate normal, that is, in this framework, if $c$ is the Gaussian copula. Likewise, choosing $h_Q$ and $h_R$ (or $\alpha_Q$ and $\alpha_R$) independently via univariate procedures would be suitable in theory only if $Q$ and $R$ were independent, not only uncorrelated. The need for a correcting factor $K_n^{(p)}$ in the above bandwidth expressions may also seem like nothing less than a heuristic, {\it ad-hoc} correction. Having said this, it was found that this way of doing gave very reliable results, as illustrated in Figures \ref{fig:loclikcopdensfix} and \ref{fig:loclikcopdensknn}. This will be even more obvious in the simulation study and the real data analysis detailed in the next sections. In addition, the suggested procedure, based on a twofold 
univariate cross-validation optimisation problem, is more stable numerically than one based on optimising a full, bivariate cross-validation criterion. This technique seems, therefore, an acceptable choice for selecting the bandwidth matrix in practice.

\section{Simulation study} \label{sec:sim}

Here, Monte Carlo simulations results are presented to compare the practical behaviour of the probit-transformation estimators with that of their main competitors. All computations have been carried out using the R software and its freely available packages. Specifically, 12 estimators were considered:
\begin{itemize}  \itemsep0em 
\item[$\cdot$] the `mirror reflection' estimator, denoted $\hat c^{(m)}$ below, as suggested in \cite{Gijbels90}. It was the first attempt at nonparametric copula density estimation, and remains a common choice for (ostensibly) correcting boundary bias. It will, therefore, be taken as benchmark. A first bandwidth matrix was obtained from the `augmented' data set (made up of $9n$ `observations' spread over an area 9 times bigger than $\Is$) via the Normal reference rule, then the final matrix was obtained by multiplying the former by $(1/9)^{2/3} \simeq 0.23$ for adjusting for the effective sample size and range;
\item[$\cdot$] the `naive' probit-transformation estimator $\hat c^{(\tau)}$ (\ref{eqn:naivefeas}) and its amended version $\hat c^{(\tau \text{am})}$, whose idea is exposed in Section \ref{subsec:naive} (for a general, non-diagonal matrix $\HH_{ST}$, the amendment takes a slightly more complicated form than (\ref{eqn:amendnaive})). The bandwidth matrix $\HH_{ST}$ was selected via a direct plug-in method \citep{Duong03} in the transformed domain $(S,T)$;
\item[$\cdot$] the improved probit-transformation estimators $\tilde c^{(\tau,1)}$ and $\tilde c^{(\tau,2)}$, given by (\ref{eqn:impctilde}), based on a $k$-NN-type bandwidth matrix selected via cross-validation as described at the end of Section \ref{sec:bandwidth}. As already observed in \cite{Geenens13} in the univariate case, when based on a fixed-bandwidth matrix these estimators performed a little less well, so the results are not shown here. The optimisation problems (\ref{eqn:loclogpol}) (local log-polynomial estimation of $f_{ST}$) and (\ref{eqn:alphacv}) ($k$-NN bandwidth selection) were solved using the relevant functions of the R package {\tt locfit}. A R package directly implementing these improved probit-transformation estimators is in preparation;
\item[$\cdot$] the Beta kernel estimator \citep{Charpentier07}, denoted $\hat c^{(\beta)}$, with \cite{Chen99}'s further bias correction. Two smoothing parameters were considered: $h=0.02$ ($\hat c_1^{(\beta)}$) and $h=0.05$ ($\hat c_2^{(\beta)}$);
\item[$\cdot$] the Bernstein copula density estimator \citep{Bouezmarni10,Bouezmarni11,Janssen13}, denoted $\hat c^{(B)}$. Two smoothing parameters were considered: $k=15$ ($\hat c_1^{(B)}$) and $k=30$ ($\hat c_2^{(B)}$);
\item[$\cdot$] the penalised hierarchical $B$-splines estimator \citep{Kauermann13}, denoted $\hat c^{(p)}$, computed by the function {\tt pencopula} in the eponymous R package. The vector of penalty coefficients was set to $\lambda=(10,10)$ ($\hat c_1^{(p)}$), $\lambda=(100,100)$ ($\hat c_2^{(p)}$), and $\lambda=(1000,1000)$ ($\hat c_3^{(p)}$). The parameters $d$ and $D$ were set to $4$ and $8$, according to \cite{Kauermann13}'s simulations study. 
\end{itemize}

\ppn $M=1,000$ independent random samples $\{(U_i,V_i);i=1,\ldots,n\}$ of sizes $n=200$, $n=500$ and $n=1000$ were generated from each of the following copulas:
\begin{itemize}  \itemsep0em 
\item[$\cdot$] the independence copula (i.e., $U_i$'s and $V_i$'s drawn independently);
\item[$\cdot$] the Gaussian copula, with parameters $\rho=0.31$, $\rho=0.59$ and $\rho=0.81$;
\item[$\cdot$] the Student $t$-copula with 10 degrees of freedom, with parameters $\rho=0.31$, $\rho=0.59$ and $\rho=0.81$;
\item[$\cdot$] the Student $t$-copula with 4 degrees of freedom, with parameters $\rho=0.31$, $\rho=0.59$ and $\rho=0.81$;
\item[$\cdot$] the Frank copula, with parameter $\theta=1.86$, $\theta=4.16$ and $\theta=7.93$;
\item[$\cdot$] the Gumbel copula, with parameter $\theta=1.25$, $\theta=1.67$ and $\theta=2.5$; 
\item[$\cdot$] the Clayton copula, with parameter $\theta=0.5$, $\theta=1.67$ and $\theta=2.5$.
\end{itemize}
For each family of copulas, the considered three values of the parameter roughly correspond to Kendall's $\tau$'s equal to 0.2, 0.4 and 0.6, respectively. Of course, all the estimations only made use of the pseudo-observations, i.e.\ the normalised ranks of the observations in the initially generated samples $\{U_i;i=1,\ldots,n\}$ and $\{V_i;i=1,\ldots,n\}$.

\ppn In order to assess the quality of the fit of an estimator $\hat{c}$ for a given copula density $c$, the Mean Integrated $L^2$-Error $\displaystyle{\E\left(\iint_\Is (\hat c(u,v)-c(u,v))^2 dudv\right)}$ was estimated by the average over the $M=1,000$ Monte Carlo replications of \[ISE(\hat{c}) \simeq \frac{1}{(N+1)^2} \sum_{k_1=1}^N \sum_{k_2=1}^N \left\{\hat{c}\left(\frac{k_1}{N+1},\frac{k_2}{N+1} \right)-c\left(\frac{k_1}{N+1},\frac{k_2}{N+1} \right) \right\}^2\]
with $N=64$. The approximated MISE can be found in Tables \ref{tab:L2:square-200}, \ref{tab:L2:square-500} and \ref{tab:L2:square-1000} for the three considered sample sizes. Note that, for ease of reading and interpretation, all the values are relative to the (approximated) MISE of the benchmark mirror estimator $\hat c^{(m)}$. For reference, the effective MISE of $\hat c^{(m)}$ is reported in italics in the last column of the table (which is, therefore, not on the same scale as the other values).

\begin{table}[ht]
\centering
\begin{tabular}{r|rr|rr|rr|rr|rrr||r}
  \hline
 \fbox{$n=200$} & $\hat c^{(\tau)}$ & $\hat c^{(\tau \text{am})}$  & $\tilde c^{(\tau,1)}$ & $\tilde c^{(\tau,2)}$ & $\hat c_1^{(\beta)}$ & $\hat c_2^{(\beta)}$ &  $\hat c_1^{(B)}$ &  $\hat c_2^{(B)}$ & $\hat c_{p}^{(1)}$ & $\hat c_{p}^{(2)}$ & $\hat c_{p}^{(3)}$ & $\hat{c}^{(m)}$\\ 
  \hline
Indep & 3.63 & 2.10 & 2.41 & 1.39 & 10.19 & 20.94 & 3.24 & 6.39 &  1.53 & 0.50 & {\bf 0.32} & {\it 0.02} \\ 
  Gauss2 & 2.52 & 1.46 & 1.55 & 0.92 & 6.81 & 13.55 & 2.18 & 4.14 &  1.01 & {\bf 0.53} & {\bf 0.49} &{\it 0.03} \\ 
  Gauss4 & 1.17 & 0.67 & 0.56 & {\bf 0.31} & 2.57 & 4.64 & 0.99 & 1.44 &  0.64 & 0.94 & 1.92 &{\it 0.08} \\ 
  Gauss6 & 0.50 & 0.30 & 0.16 & {\bf 0.08} & 0.88 & 1.15 & 0.69 & 0.53 &  0.76 & 1.25 & 2.16 & {\it 0.37} \\ 
  Std(10)2 & 2.05 & 1.15 & 1.28 & {\bf 0.72} & 5.30 & 10.95 & 1.71 & 3.18 &  0.96 & {\bf 0.77} & 0.90 & {\it 0.03} \\ 
  Std(10)4 & 0.76 & 0.53 & 0.42 & {\bf 0.24} & 1.90 & 3.22 & 0.87 & 1.08 &  0.71 & 1.04 & 1.77 & {\it 0.12} \\ 
  Std(10)6 & 0.33 & 0.28 & {\bf 0.13} & {\bf 0.10} & 0.78 & 0.85 & 0.68 & 0.48 &  0.82 & 1.22 & 1.89 & {\it 0.51} \\ 
  Std(4)2 & 1.12 & 0.76 & 0.81 & {\bf 0.57} & 2.83 & 5.74 & 1.12 & 1.74 &  0.87 & 1.00 & 1.27 & {\it 0.07} \\ 
  Std(4)4 & 0.44 & 0.41 & 0.32 & {\bf 0.25} & 1.23 & 1.79 & 0.76 & 0.71 &  0.79 & 1.12 & 1.59 & {\it 0.22} \\ 
  Std(4)6 & {\bf 0.19} & 0.28 & {\bf 0.15} & {\bf 0.16} & 0.74 & 0.60 & 0.73 & 0.51 &  0.88 & 1.17 & 1.60 & {\it 0.82} \\ 
  Frank2 & 3.54 & 2.00 & 2.17 & 1.28 & 9.06 & 18.22 & 2.81 & 5.53 &  1.20 & 0.37 & {\bf 0.30} & {\it 0.02} \\ 
  Frank4 & 2.74 & 1.41 & 1.28 & 0.88 & 5.40 & 10.41 & 1.83 & 3.14 &  {\bf 0.55} & 0.85 & 2.99 & {\it 0.03} \\ 
  Frank6 & 1.31 & 0.62 & {\bf 0.50} & {\bf 0.51} & 1.73 & 2.92 & 1.05 & 1.09 &  0.58 & 1.66 & 4.03 &{\it 0.13} \\ 
  Gumbel2 & 1.14 & 0.79 & 0.82 & {\bf 0.57} & 3.22 & 6.16 & 1.24 & 1.92 &  0.91 & 0.92 & 1.06 & {\it 0.06} \\ 
  Gumbel4 & 0.36 & 0.42 & {\bf 0.29} & {\bf 0.30} & 1.18 & 1.52 & 0.77 & 0.71 &  0.84 & 1.08 & 1.46 &{\it 0.26} \\ 
  Gumbel6 & {\bf 0.18} & 0.34 & {\bf 0.18} & 0.26 & 0.79 & 0.56 & 0.78 & 0.59 &  0.91 & 1.12 & 1.45 &{\it 1.08} \\ 
  Clayton2 & 1.13 & 0.76 & 0.72 & {\bf 0.50} & 3.16 & 6.54 & 1.16 & 1.87 &  0.89 & 0.96 & 1.25 &{\it 0.06} \\ 
  Clayton4 & {\bf 0.22} & 0.42 & {\bf 0.23} & 0.34 & 0.86 & 0.67 & 0.79 & 0.62 &  0.92 & 1.09 & 1.33 &{\it 0.75} \\ 
  Clayton6 & {\bf 0.19} & 0.43 & {\bf 0.22} & 0.32 & 0.82 & 0.51 & 0.82 & 0.66 &  0.95 & 1.08 & 1.26 & {\it 1.77} \\ 
   \hline
 \end{tabular}
\caption{(approximated) MISE relative to the MISE of the mirror-reflection estimator (last column), $n=200$. Bold values show the minimum MISE for the corresponding copula (non-significantly different values are highlighted as well). } 
\label{tab:L2:square-200}
\end{table}

\begin{table}[h]
\centering
\begin{tabular}{r|rr|rr|rr|rr|rrr||r}
  \hline
 \fbox{$n=500$} & $\hat c^{(\tau)}$ & $\hat c^{(\tau \text{am})}$  & $\tilde c^{(\tau,1)}$ & $\tilde c^{(\tau,2)}$ & $\hat c_1^{(\beta)}$ & $\hat c_2^{(\beta)}$ &  $\hat c_1^{(B)}$ &  $\hat c_2^{(B)}$ & $\hat c_{p}^{(1)}$ & $\hat c_{p}^{(2)}$ & $\hat c_{p}^{(3)}$ & $\hat{c}^{(m)}$\\ 
  \hline
Indep & 3.37 & 2.31 & 2.54 & 1.27 & 7.90 & 13.99 & 2.06 & 4.21 &  1.53 & 0.51 & {\bf 0.23} &{\it 0.01} \\ 
  Gauss2 & 2.22 & 1.47 & 1.63 & 0.78 & 5.24 & 8.82 & 1.41 & 2.59 &  0.98 & {\bf 0.63} & {\bf 0.68} &{\it 0.02} \\ 
  Gauss4 & 0.79 & 0.55 & 0.48 & {\bf 0.23} & 1.88 & 2.55 & 0.80 & 0.84 &  0.63 & 0.97 & 2.46 &{\it 0.06} \\ 
  Gauss6 & 0.31 & 0.24 & 0.13 & {\bf 0.06} & 0.74 & 0.60 & 0.70 & 0.39 &  0.73 & 1.23 & 2.54 &{\it 0.30 }\\ 
  Std(10)2 & 1.65 & 1.11 & 1.17 & {\bf 0.62} & 3.70 & 6.21 & 1.19 & 1.90 &  0.93 & 0.85 & 1.19 &{\it 0.02} \\ 
  Std(10)4 & 0.52 & 0.42 & 0.34 & {\bf 0.17} & 1.35 & 1.70 & 0.74 & 0.63 &  0.69 & 1.08 & 2.19 &{\it 0.10} \\ 
  Std(10)6 & 0.21 & 0.21 & {\bf 0.10} & {\bf 0.06} & 0.69 & 0.41 & 0.73 & 0.41 &  0.80 & 1.21 & 2.15 &{\it 0.44} \\ 
  Std(4)2 & 0.78 & 0.64 & 0.60 & {\bf 0.46} & 1.88 & 2.86 & 0.86 & 0.96 &  0.81 & 1.01 & 1.58 &{\it 0.05} \\ 
  Std(4)4 & 0.28 & 0.33 & 0.21 & {\bf 0.18} & 0.96 & 0.88 & 0.72 & 0.50 &  0.77 & 1.12 & 1.87 &{\it 0.18} \\ 
  Std(4)6 & {\bf 0.12} & 0.22 & {\bf 0.10} & {\bf 0.11} & 0.70 & 0.31 & 0.77 & 0.48 &  0.87 & 1.17 & 1.77 &{\it 0.74} \\ 
  Frank2 & 3.29 & 2.17 & 2.32 & 1.20 & 7.49 & 12.62 & 2.05 & 3.82 &  1.24 & {\bf 0.41} & {\bf 0.38} &{\it 0.01} \\ 
  Frank4 & 2.49 & 1.41 & 1.40 & 0.91 & 4.39 & 6.55 & 1.53 & 2.11 &  {\bf 0.58} & 0.80 & 4.50 &{\it 0.02} \\ 
  Frank6 & 1.02 & 0.54 & {\bf 0.43} & {\bf 0.43} & 1.44 & 1.71 & 1.13 & 0.81 &  0.49 & 1.62 & 5.67 &{\it 0.09} \\ 
  Gumbel2 & 0.83 & 0.71 & 0.65 & {\bf 0.47} & 2.16 & 3.20 & 0.90 & 1.09 &  0.87 & 0.98 & 1.30 &{\it 0.05} \\ 
  Gumbel4 & {\bf 0.25} & 0.35 & {\bf 0.21} & {\bf 0.23} & 0.94 & 0.72 & 0.76 & 0.53 &  0.82 & 1.09 & 1.64 &{\it 0.23} \\ 
  Gumbel6 & {\bf 0.11} & 0.26 & {\bf 0.12} & 0.18 & 0.77 & 0.36 & 0.82 & 0.57 &  0.92 & 1.12 & 1.56 &{\it 0.99} \\ 
  Clayton2 & 0.85 & 0.67 & 0.61 & {\bf 0.40} & 2.20 & 3.34 & 0.88 & 1.06 &  0.84 & 1.02 & 1.57 &{\it 0.04} \\ 
  Clayton4 & {\bf 0.15} & 0.32 & {\bf 0.14} & 0.21 & 0.79 & 0.37 & 0.79 & 0.56 &  0.91 & 1.09 & 1.43 &{\it 0.69} \\ 
  Clayton6 & {\bf 0.15} & 0.35 & {\bf 0.13} & 0.19 & 0.81 & 0.40 & 0.85 & 0.65 & 0.95 & 1.08 &  1.32 &{\it 1.67} \\ 
   \hline
 \end{tabular}
\caption{(approximated) MISE relative to the MISE of the mirror-reflection estimator (last column), $n=500$. Bold values show the minimum MISE for the corresponding copula (non-significantly different values are highlighted as well).} 
\label{tab:L2:square-500}
\end{table}

\begin{table}[h]
\centering
\begin{tabular}{r|rr|rr|rr|rr|rrr||r}
  \hline
 \fbox{$n=1000$} & $\hat c^{(\tau)}$ & $\hat c^{(\tau \text{am})}$  & $\tilde c^{(\tau,1)}$ & $\tilde c^{(\tau,2)}$ & $\hat c_1^{(\beta)}$ & $\hat c_2^{(\beta)}$ &  $\hat c_1^{(B)}$ &  $\hat c_2^{(B)}$ & $\hat c_{p}^{(1)}$ & $\hat c_{p}^{(2)}$ & $\hat c_{p}^{(3)}$ & $\hat{c}^{(m)}$\\ 
  \hline
Indep & 3.57 & 2.80 & 2.89 & 1.40 & 7.96 & 11.65 & 1.69 & 3.43 &  1.62 & 0.50 & {\bf 0.14} &{\it 0.01} \\ 
  Gauss2 & 2.03 & 1.52 & 1.60 & 0.76 & 4.63 & 6.06 & 1.10 & 1.82 &  0.98 & {\bf 0.66} & 0.89 &{\it 0.01} \\ 
  Gauss4 & 0.63 & 0.49 & 0.44 & {\bf 0.21} & 1.72 & 1.60 & 0.75 & 0.58 &  0.62 & 0.99 & 2.93 &{\it 0.05} \\ 
  Gauss6 & 0.21 & 0.20 & 0.11 & {\bf 0.05} & 0.74 & 0.33 & 0.77 & 0.37 &  0.72 & 1.21 & 2.83 &{\it 0.26} \\ 
  Std(10)2 & 1.36 & 1.06 & 1.04 & {\bf 0.55} & 3.07 & 3.98 & 0.96 & 1.24 &  0.86 & 0.87 & 1.48 &{\it 0.02} \\ 
  Std(10)4 & 0.41 & 0.37 & 0.28 & {\bf 0.15} & 1.22 & 1.00 & 0.74 & 0.46 &  0.68 & 1.08 & 2.51 &{\it 0.08} \\ 
  Std(10)6 & 0.15 & 0.18 & {\bf 0.08} & {\bf 0.05} & 0.71 & 0.24 & 0.79 & 0.41 &  0.84 & 1.21 & 2.36 &{\it 0.39} \\ 
  Std(4)2 & 0.61 & 0.56 & 0.50 & {\bf 0.40} & 1.57 & 1.80 & 0.78 & 0.67 &  0.75 & 1.01 & 1.88 &{\it 0.04} \\ 
  Std(4)4 & 0.21 & 0.27 & {\bf 0.17} & {\bf 0.15} & 0.88 & 0.51 & 0.75 & 0.42 &  0.75 & 1.12 & 2.07 &{\it 0.16} \\ 
  Std(4)6 & {\bf 0.09} & 0.17 & {\bf 0.08} & {\bf 0.09} & 0.70 & 0.19 & 0.82 & 0.47 &  0.90 & 1.17 & 1.90 &{\it 0.67} \\ 
  Frank2 & 3.31 & 2.42 & 2.57 & 1.35 & 7.16 & 9.63 & 1.70 & 2.95 &  1.31 & {\bf 0.45} & {\bf 0.49} &{\it 0.01} \\ 
  Frank4 & 2.35 & 1.45 & 1.51 & 0.99 & 4.42 & 4.89 & 1.49 & 1.65 &  {\bf 0.60} & 0.72 & 6.14 &{\it 0.01} \\ 
  Frank6 & 0.96 & 0.52 & {\bf 0.45} & {\bf 0.44} & 1.51 & 1.19 & 1.35 & 0.76 &  0.65 & 1.58 & 7.25 &{\it 0.07} \\ 
  Gumbel2 & 0.65 & 0.62 & 0.56 & {\bf 0.43} & 1.77 & 1.97 & 0.82 & 0.75 &  0.83 & 1.03 & 1.52 &{\it 0.04} \\ 
  Gumbel4 & {\bf 0.18} & 0.28 & {\bf 0.16} & {\bf 0.19} & 0.89 & 0.41 & 0.78 & 0.47 &  0.81 & 1.10 & 1.78 &{\it 0.21} \\ 
  Gumbel6 & {\bf 0.09} & 0.21 & {\bf 0.10} & 0.15 & 0.78 & 0.29 & 0.85 & 0.58 &  0.94 & 1.12 & 1.63 &{\it 0.93} \\ 
  Clayton2 & 0.63 & 0.60 & 0.51 & {\bf 0.34} & 1.78 & 1.99 & 0.78 & 0.70 &  0.79 & 1.04 & 1.79 &{\it 0.04} \\ 
  Clayton4 & {\bf 0.11} & 0.26 & {\bf 0.10} & {\bf 0.15} & 0.79 & 0.27 & 0.83 & 0.56 &  0.90 & 1.10 & 1.50 &{\it 0.65} \\ 
  Clayton6 & {\bf 0.11} & 0.28 & {\bf 0.08} & 0.15 & 0.82 & 0.35 & 0.88 & 0.67 &  0.96 & 1.09 & 1.36 &{\it 1.61} \\ 
   \hline
\end{tabular}
\caption{(approximated) MISE relative to the MISE of the mirror-reflection estimator (last column), $n=1000$. Bold values show the minimum MISE for the corresponding copula (non-significantly different values are highlighted as well). } 
\label{tab:L2:square-1000}
\end{table}

\ppn It turns out that the estimators $\tilde{c}^{(\tau,1)}$ and $\tilde{c}^{(\tau,2)}$ are clearly the best, overall, on this $L_2$-error criterion, and this for all sample sizes. They always dramatically improve on the Beta and Bernstein estimators, and they also do much better than the mirror reflection and the penalised B-splines estimators when the dependence is not close to null. When the depence is very low, $\hat{c}^{(m)}$ and $\hat{c}^{(p)}$ do better, which can be easily understood. It is well known that the mirror reflection estimator efficiently deals with boundary effects only when the partial derivatives of $c$ are 0 there (`shoulder'). It is, therefore, particularly appropriate for the independence copula ($c \equiv 1$) and other very flat copula densities such as Gaussian or Frank with low dependence. The penalised B-splines estimator does even better when using a huge penalty for roughness, for obvious reasons. In all other cases, and particularly when the copula density becomes unbounded in 
some corners (but not only), $\tilde{c}^{(\tau,1)}$ and $\tilde{c}^{(\tau,2)}$ dramatically outperform their competitors. In fact, mirror reflection and splines are not appropriate methods for estimating unbounded copula densities, and this is a real problem given that those are the most interesting cases in practice. By construction, the Beta kernel estimator always tends to be zero along boundaries (see for instance Figure \ref{fig:loss-alae-comp} below), hence its even worse performance in this framework. The Bernstein estimator does better than $\hat{c}^{(\beta)}$, but cannot really compete with $\tilde{c}^{(\tau,1)}$ and $\tilde{c}^{(\tau,2)}$. Of course, one can argue that the smoothing parameters used for $\hat{c}^{(\beta)}$, $\hat{c}^{(B)}$ and $\hat{c}^{(p)}$ have been selected mostly arbitrarily and are not adequate. This may be true, however, there is no simple, data-driven way of selecting those parameters, hence the choice was made subjectively exactly as a practitioner should have resolved to 
act. In addition, the 
above observations support that bad smoothing parameter choice is not the only reason for the poor performance of some of those estimators. Other evidence of that will be given in the next section on a real data set.

\ppn In general, the local log-quadratic estimator $\tilde{c}^{(\tau,2)}$ is doing better than the local log-linear $\tilde{c}^{(\tau,1)}$, which was expected from the theoretical results. A notable exception, though, is in presence of high tail dependence, i.e.\ when the copula density tends very quickly to $\infty$ at one of the corners of $\Is$, such as for Clayton and Gumbel copulas with high Kendall's $\tau$. In fact, the extra smoothness guaranteed by local log-quadratic estimation tends to prevent the estimator from growing too quickly in the corners, and this is thus slightly detrimental in those cases. The same comment holds true when comparing the naive estimator $\hat{c}^{(\tau)}$ to its amended version $\hat{c}^{(\tau \text{am})}$. Generally, $\hat{c}^{(\tau \text{am})}$ has lower MISE than $\hat{c}^{(\tau)}$, except in the above-mentioned cases of high tail dependence. Indeed, the amendment prevents the estimate from exploding, even when the true density does. In any case, these `naive' versions 
cannot really match the performance of the `improved' versions $\tilde{c}^{(\tau,1)}$ and $\tilde{c}^{(\tau,2)}$ on MISE, not to mention that their visual appearance is by far less pleasant. 

\ppn Finally, other criteria were considered for comparing the different estimators, such as $L_1$-error on the square $\iint_\Is |\hat c(u,v)-c(u,v)| dudv$, $L_1$- and $L_2$-error on the first diagonal ($u=v$) and on a side ($u=0.01$) of $\Is$, or $L_1$- and $L_2$-error at a given point in one of the corners of $\Is$ ($(u,v)=(0.01,0.01)$). These results are available on request from the authors. All show, to the same extent as above, the superiority of the improved probit-transformation estimators over their competitors.

\section{Real data analysis} \label{sec:realdat}

In this section the well-known `Loss-ALAE' dataset, reporting the indemnity payment ($X_i$'s) and allocated loss adjustment expense ($Y_i$'s) associated to $1,500$ losses from an insurance company, is considered. Analysed in \cite{Frees98,Klugman99} and \cite{Denuit06}, this dataset has since then become a classic in the copula literature. In particular, \cite{Frees98} mentioned that the Gumbel copula with $\hat{\theta}=1.453$ provides an excellent fit. The data set initially contains 34 censored observations, that were excluded here as the suggested estimators were not designed to take censorship into account. Using more advanced model selection ideas, \cite{Chen10} also found that the Gumbel copula (with the same parameter $\hat{\theta}$) fits the dataset (restricted to its complete cases) the best out of most of the usual parametric copula models. The aim here is to test the probit-transformation estimators $\tilde{c}^{(\tau,p)}$ ($p=1,2$) (and their competitors) against that parametric `gold standard', 
shown in Figure \ref{fig:loss-alae1} (up-left).

\begin{figure}[h]
\includegraphics[width=0.95\textwidth]{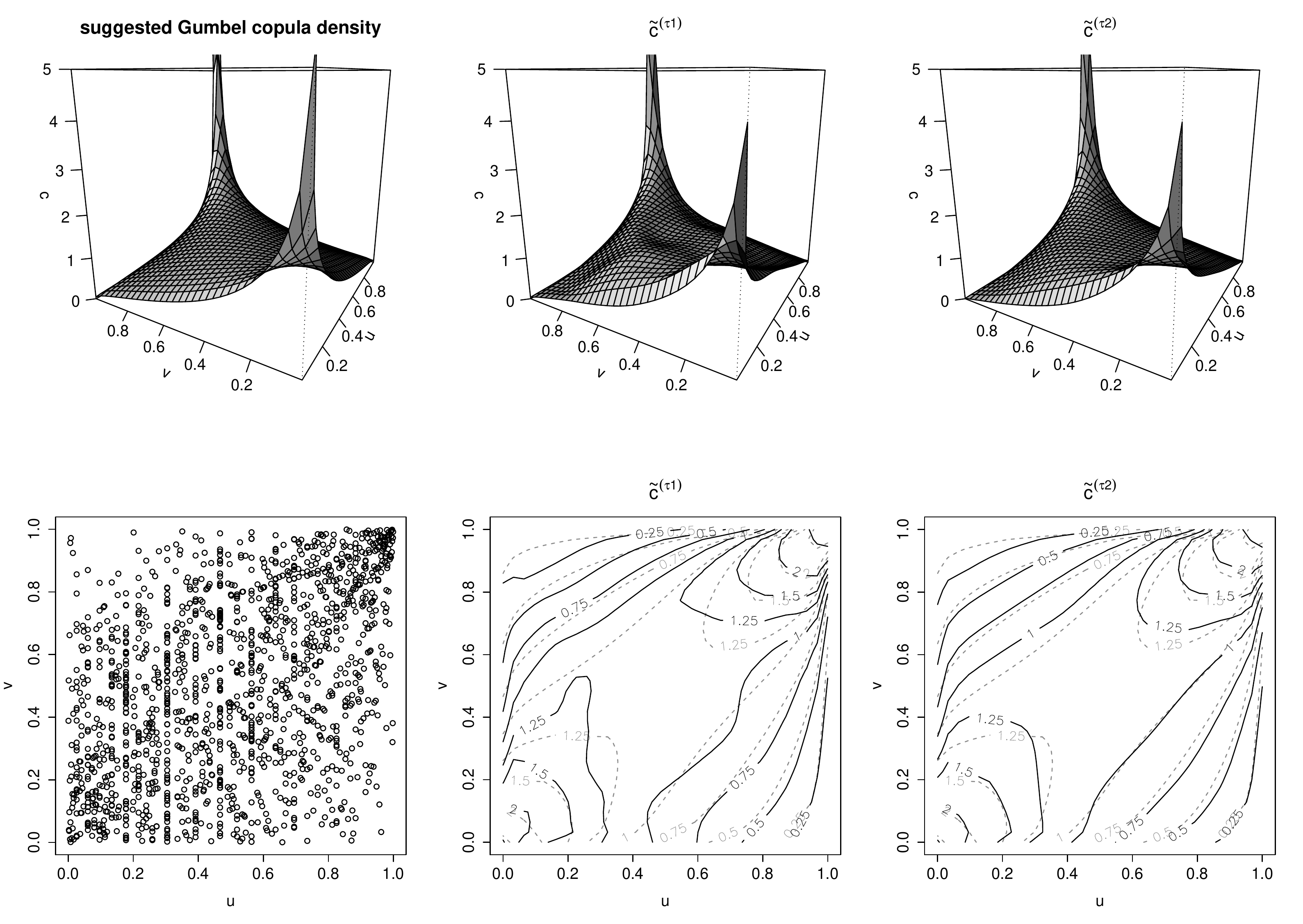}
\caption{Loss-ALAE dataset: suggested parametric copula density (Gumbel with parameter $\hat{\theta}=1.453$; upper-left panel) and probit-transformation estimates with $p=1$ (middle column) and $p=2$ (right column). The upper line shows 3-d views and the bottom line shows contour lines, superimposed on the Gumbel copula density contour lines. Pseudo-observations are shown in the bottom-left panel.}\label{fig:loss-alae1}
\end{figure}

\ppn The two probit-transformation estimators (local log-linear and local log-quadratic) were first fit to the data set. In both cases, a $k$-NN bandwidth matrix was used. Using the selection rule prescribed in Section \ref{sec:bandwidth}, the parameters $(\alpha, \kappa) = (0.24,1.28)$ for $p=1$ and $(\alpha, \kappa) = (0.51,1.01)$ for $p=2$ were obtained in an automatic manner. The estimator  $\tilde{c}^{(\tau,2)}$ is, again, very similar to the parametric fit. In particular, it has that very smooth and pleasant appearance of parametric estimates, while being based on a fully nonparametric procedure. Reproducing `parametric smoothness' without sacrificing any flexibility is, of course, a huge achievement for $\tilde{c}^{(\tau,2)}$. Naturally, $\tilde{c}^{(\tau,1)}$ is less smooth (smaller value of $\alpha$ than for $\tilde{c}^{(\tau,2)}$, for the reasons explained at the end of Section \ref{sec:asymptpropimprov}), but is still totally acceptable. Both nonparametric estimates suggest 
that the true underlying copula density decays towards the $(0,1)$-corner quicker than what the Gumbel model shows (this is particularly clear from the contour lines). Admittedly, there is no way of knowing what is the truth here. However, $\tilde{c}^{(\tau,1)}$ and $\tilde{c}^{(\tau,2)}$ are based only on the data (it is visually obvious that the upper-left corner of $\Is$ is much less endowed in data than the bottom-right corner), and not on any prior assumption. On the contrary, the Gumbel copula density is inherently symmetric in $u$ and $v$. The peak in the density at $(0,0)$ also appears less high on the nonparametric estimates than on the Gumbel copula density.

\ppn Figure \ref{fig:loss-alae-comp} shows the competitors on the same data set: the mirror reflection estimator, two Beta kernel estimators, two Bernstein estimators and two penalised $B$-splines estimators. Of course, $\hat{c}^{(m)}$ cannot cope with this unbounded copula density. For the other three methods, producing an estimate reasonably smooth required a value of the smoothing parameter ($h$ for Beta kernel estimators, $k$ for Bernstein estimators and $\lambda$ for penalised $B$-splines) preventing correct estimation of the peaks at $(0,0)$ and $(1,1)$. To get estimates showing a peak at $(1,1)$ of roughly the right magnitude, one needed to use smoothing parameters producing unacceptably undersmoothed estimation elsewhere on $\Is$, yet not even able to properly catch the peak at $(0,0)$. If the Gumbel copula density is assumed to be close to the truth for this data set, then there is no question that $\tilde{c}^{(\tau,1)}$ and $\tilde{c}^{(\tau,2)}$ are, by far, the best. This, also, illustrates that 
the results obtained in the simulations are not only due to bad smoothing parameter choices.

\begin{figure}[h]
\includegraphics[width=0.95\textwidth]{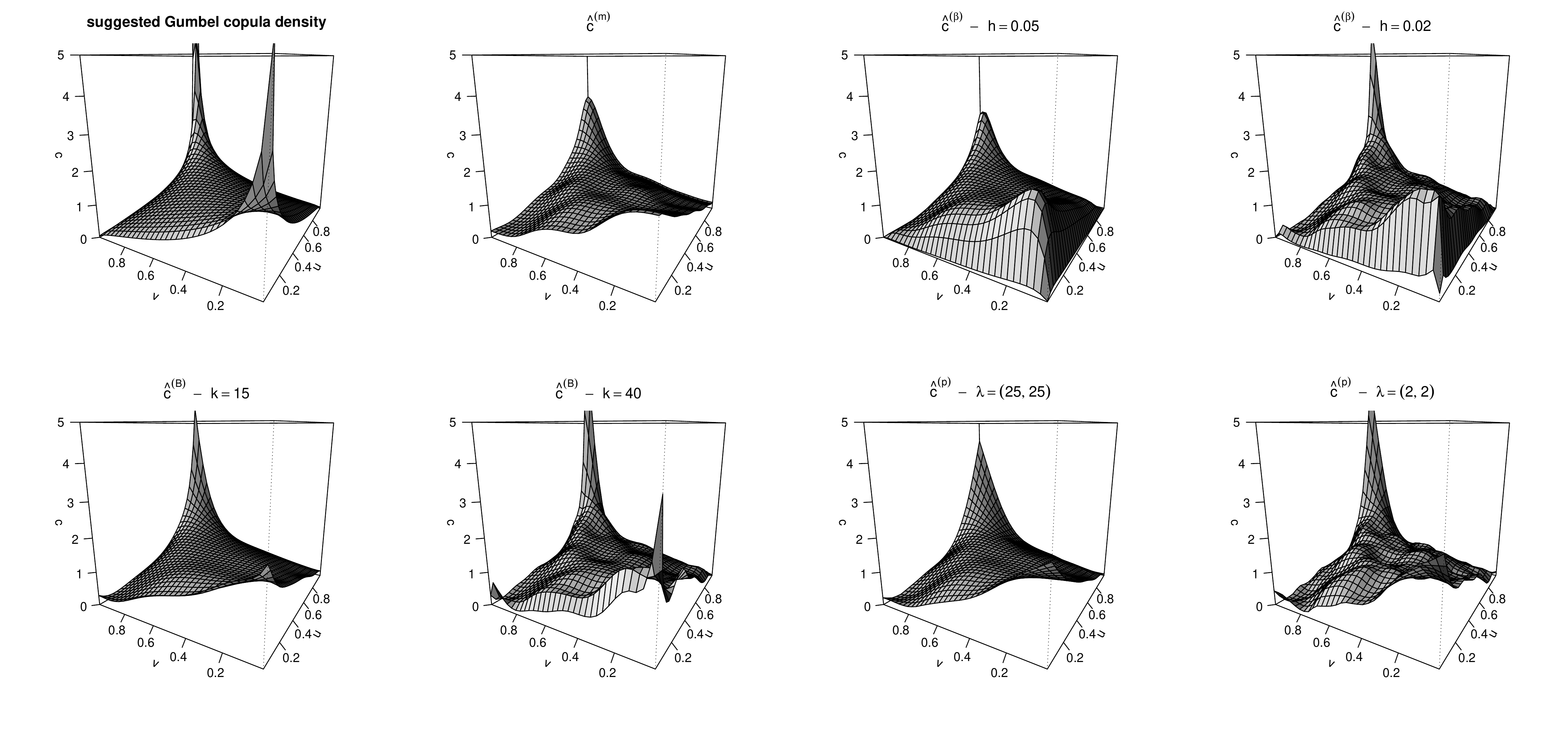}
\caption{Loss-ALAE dataset: suggested Gumbel copula density ($\hat{\theta} = 1.456$), mirror reflection estimator, Beta kernel estimators with $h=0.05$ and $h=0.02$, Bernstein estimators with $k=15$ and $k=40$ and penalized $B$-splines estimators with $d=4, D=8$ and with $\lambda = (25,25)$ and $\lambda = (2,2)$}\label{fig:loss-alae-comp}
\end{figure}

\section{Concluding remarks} \label{sec:ccl}

Development of efficient kernel-type methods for nonparametric copula modelling have been delayed owing mainly to the bounded support of copulas, namely the unit square $\Is$. It is, indeed, well known that kernel estimators heavily suffer from boundary bias issues, which are not trivial to fix. In this paper, a new kernel-type estimator for the copula density has been proposed. It is based on the probit-transformation idea suggested in \cite{Charpentier07} and studied in full in the univariate case in \cite{Geenens13}. This `improved probit-transformation estimator' deals with boundary bias in a very natural way. In addition, it has been seen to easily cope with potentially unbounded copula densities, which are the common and interesting cases in copula modelling. An easy-to-implement selection rule for the necessary smoothing parameters has also been proposed. This procedure has been seen to be very stable and to give very good results in practice. In particular, a version of the estimator ($\tilde{c}
^{(\tau,2)}$ with $k$-NN-type bandwidth matrix) is able to reproduce the smooth and pleasant appearance of parametric models, while keeping the flexibility of fully nonparametric estimation procedures. A comprehensive simulation study has emphasised the very good practical performance of that estimator compared to its main competitors. 

\ppn Several important points remain to be studied, though. First, as of now, the theoretical properties of the estimator have been derived under the assumption of i.i.d.\ sampling, making use of the strong approximation for the empirical copula process provided by Proposition 4.2 of \cite{Segers12}. To the best of these authors' knowledge, this result has not been proved in the case of weakly dependent observations. It would be particularly significant to investigate this in a near future, given the predominant place recently found by copula modelling in the setting of time series, notably in finance. Other directions for future research would look for using the new copula density estimator in a variety of problems, for instance copula goodness-of-fit tests \citep{Fermanian05,Scaillet07} or nonparametric conditional density estimation \citep{Faugeras09}. Finally, it must be said that, in theory, the idea presented in this paper is not bound to the bivariate case but extends in a straightforward way to higher dimensional copulas as well. Of course, in practice, this is wise only within the limits allowed by the {\it curse of dimensionality}.

\section*{Acknowledgments}

The first author was supported by a Faculty Research Grant from the Faculty of Science, University of New South Wales (Australia). The second author acknowledges additional funding provided by the Natural Sciences and Engineering Research Council of Canada. The third author was supported by an A.R.C.\ contract from the {\it Communaut\'e Fran\c{c}aise de Belgique} and by the IAP research network grant nr. P7/06 of the Belgian government (Belgian Science Policy).

\appendix 
\section{Appendix} \label{app:proofs}

First a technical lemma, which may be of interest of its own, is stated.

\begin{lemma} \label{lem:unicont} Under Assumptions \ref{ass:FXY}-\ref{ass:copdens}, the density $f_{ST}$ of the vector $(S,T)=(\Phi^{-1}(U),\Phi^{-1}(V))$ is uniformly bounded on $\R^2$, and so are its partial derivatives up to the second order.
\end{lemma}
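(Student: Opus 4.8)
The plan is to read everything off the identity (\ref{eqn:fST}), $f_{ST}(s,t)=c(\Phi(s),\Phi(t))\phi(s)\phi(t)$, together with the explicit expressions (\ref{eqn:firstderfST})--(\ref{eqn:secondderfST}) for its derivatives (and the obvious analogues for $\partial/\partial t$, $\partial^2/\partial t^2$, $\partial^2/\partial s\partial t$). On any compact subset of $\R^2$, Assumption \ref{ass:copdens} and the smoothness of $\Phi$ make $f_{ST}$ and its first two partial derivatives continuous, hence bounded there; so the whole issue is the behaviour as $\|(s,t)\|\to\infty$, which under the change of variables $u=\Phi(s),v=\Phi(t)$ is exactly the behaviour of $c$ near $\partial\Is$.

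For $f_{ST}$ itself I would combine the bound (\ref{eqn:unbounded}), $c(u,v)\le K_{00}\min\{1/(u(1-u)),\,1/(v(1-v))\}$, with the classical Mills-ratio estimates (e.g.\ $\tfrac{s}{1+s^2}\phi(s)\le 1-\Phi(s)$ for $s>0$, and its mirror image). Writing $a(s):=\phi(s)/(\Phi(s)(1-\Phi(s)))$, these bounds and continuity give a universal constant $C_0$ with $a(s)\le C_0(1+|s|)$ on all of $\R$. Hence $f_{ST}(s,t)\le K_{00}C_0\min\{(1+|s|)\phi(t),\,(1+|t|)\phi(s)\}$, and at an arbitrary $(s,t)$ one simply keeps the product carrying the Gaussian factor of the coordinate of larger modulus: if $|s|\le|t|$ the first term is $\le(1+|t|)\phi(t)$, otherwise the second is $\le(1+|s|)\phi(s)$, so in all cases $f_{ST}(s,t)\le K_{00}C_0\sup_{x\in\R}(1+|x|)\phi(x)<\infty$. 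This ``pick the steeper margin'' device --- using the $\min$ in (\ref{eqn:unbounded}) to trade an unbounded factor in one variable for a rapidly decaying Gaussian in the other --- is the engine of the whole lemma.

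For the derivatives one substitutes into (\ref{eqn:firstderfST})--(\ref{eqn:secondderfST}): each summand is (a partial derivative of $c$ of order $\le 2$ evaluated at $(\Phi(s),\Phi(t))$) $\times$ (a polynomial of degree $\le 2$ in $s,t$) $\times$ $\phi^j(s)\phi^k(t)$ with $j,k\ge 1$. The summands involving only $c$ (such as $(s^2-1)\,c(\Phi(s),\Phi(t))\phi(s)\phi(t)$) are handled exactly as above, the polynomial factor being swallowed by a Gaussian factor after the $\min$-step since $\sup_x(1+|x|)^m\phi(x)<\infty$. The remaining summands --- those carrying $\partial c/\partial u$, $\partial^2 c/\partial u^2$ and the $v$-analogues --- require quantitative control of these derivatives of $c$ near $\partial\Is$, which I would try to derive from Assumption \ref{ass:cop} (the bound on $\partial^2 C/\partial u^2$) via $\partial c/\partial u=\partial_v(\partial^2 C/\partial u^2)$, the vanishing of $\partial^2 C/\partial u^2$ on the edges $v\in\{0,1\}$, and the continuity statements of Assumption \ref{ass:copdens}; one then checks that the prefactors $\phi^2(\Phi^{-1}(u))$, $\phi^3(\Phi^{-1}(u))$ multiplying them vanish fast enough near $u\in\{0,1\}$ to absorb the blow-up (possibly up to logarithmic factors, which the super-polynomial Gaussian decay kills anyway), and the same $\min$/Gaussian-tail estimates close the argument. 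I expect this last point --- extracting boundary bounds on the partial derivatives of $c$ of precisely the order that matches the vanishing Gaussian prefactors --- to be the genuinely delicate part; the rest is routine bookkeeping with Mills' inequality and ``Gaussians beat polynomials''.
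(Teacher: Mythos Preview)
Your treatment of $f_{ST}$ itself is correct and in fact a bit cleaner than the paper's: you use the $\min$ in (\ref{eqn:unbounded}) directly together with Mills' inequality, whereas the paper first passes through the inequality
\[
c(u,v)\le K_{00}\bigl\{uv(1-u)(1-v)\bigr\}^{-\beta}\qquad\text{for some }\beta\in(1/2,1),
\]
and then uses that $\phi(s)/\{\Phi(s)^\beta(1-\Phi(s))^\beta\}$ is bounded for $\beta<1$. Both routes lead to the same conclusion.

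The genuine gap is exactly where you flagged it: the bounds on the partial derivatives of $c$ near $\partial\Is$. Your proposed route --- extracting control of $\partial c/\partial u=\partial_v(\partial^2 C/\partial u^2)$ and of $\partial^2 c/\partial u^2$ from Assumption~\ref{ass:cop} --- does not go through as stated. Assumption~\ref{ass:cop} only bounds $\partial^2 C/\partial u^2$ as a \emph{function}; knowing that a function is bounded (and vanishes on the edges $v\in\{0,1\}$) says nothing about the size of its $v$-derivative, so no bound on $\partial c/\partial u$ follows. For $\partial^2 c/\partial u^2=\partial^4 C/\partial u^3\partial v$ the situation is worse, since Assumption~\ref{ass:cop} gives no information at all on third-order $u$-derivatives of $C$. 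The Gaussian prefactors $\phi^2,\phi^3$ will not save you unless you first know the \emph{rate} at which the derivatives of $c$ blow up, and that rate is precisely what is missing.

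The paper closes this gap by a different mechanism: starting from the $\beta$-bound above, it observes that $\lim_{(u,v)\to(0,0)}u^\beta v^\beta c(u,v)$ is finite and then invokes a multivariable L'Hospital-type result (Theorem~1 of Lawlor, 2012) to deduce that $\lim_{(u,v)\to(0,0)} u^{\beta+i}v^{\beta+j}\,|\partial^{i+j}c/\partial u^i\partial v^j|$ is also finite for $i+j\le 2$ (and similarly at the other corners and edges). This yields explicit boundary bounds
\[
\left|\frac{\partial^{i+j} c}{\partial u^i\partial v^j}(u,v)\right|\le \frac{K_{ij}}{u^{\beta+i}(1-u)^{\beta+i}v^{\beta+j}(1-v)^{\beta+j}},
\]
which plug directly into (\ref{eqn:firstderfST})--(\ref{eqn:secondderfST}); the extra powers of $u(1-u)$ are matched exactly by the extra factors of $\phi(\Phi^{-1}(u))$, and boundedness follows from $\sup_s \phi(s)/\{\Phi(s)^\gamma(1-\Phi(s))^\gamma\}<\infty$ for any $\gamma<1$. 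In short, the paper gets the derivative bounds \emph{from the bound on $c$ itself}, not from Assumption~\ref{ass:cop}; that is the missing idea in your sketch.
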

\begin{proof} 
From Assumption \ref{ass:copdens}, one easily obtains that, for all $(u,v) \in (0,1)^2$,
\begin{align}
 c(u,v) & \leq K_{00} \min\left(\frac{1}{u(1-u)},\frac{1}{v(1-v)} \right)  \notag \\
& = K_{00} \, \min\left(\frac{1}{u(1-u)},\frac{1}{v(1-v)} \right)^\alpha \min\left(\frac{1}{u(1-u)},\frac{1}{v(1-v)} \right)^{1-\alpha} \qquad \forall \alpha \in (0,1) \notag \\
& \leq K_{00} \, \min\left(\frac{1}{u(1-u)},\frac{1}{v(1-v)} \right)^\alpha \max\left(\frac{1}{u(1-u)},\frac{1}{v(1-v)} \right)^{1-\alpha} \notag \\
& \leq K_{00} \, \min\left(\frac{1}{u(1-u)},\frac{1}{v(1-v)} \right)^\beta \max\left(\frac{1}{u(1-u)},\frac{1}{v(1-v)} \right)^{\beta} \qquad \text{with } \beta = \max(\alpha,1-\alpha) \notag \\
& = K_{00} \,\left( \frac{1}{uv(1-u)(1-v)}\right)^\beta, \qquad \text{ for some } \beta \in (1/2,1). \label{eqn:unb}
\end{align}
In particular, when approaching the $(0,0)$-corner, the above implies that
\[\lim_{(u,v) \to (0,0)} u^\beta v^{\beta} c(u,v) = k_{00}, \]
for some constant $0 \leq k_{00} < \infty$ (and similar towards the boundaries and the other corners of $\Is$). Now, applying Theorem 1 of \cite{Lawlor12}, one can show that this implies that, for $i,j = 0,1,2$ s.t.\ $i+j \leq 2$, there exist constants $k_{ij}<\infty$ such that
\[\lim_{(u,v) \to (0,0)} u^{\beta+i} v^{\beta+j} \left|\frac{\partial^{i+j} c(u,v)}{\partial u^i \partial v^j} \right| = k_{ij} \]
(and again, similar results hold at boundaries and at the other corners of $\Is$). Given that $c(u,v)$ is assumed to be twice continuously differentiable everywhere on the interior of $\Is$ (i.e., $c(u,v)$ and all its mixed partial derivatives of the first two orders can only go unbounded towards the boundaries of $\Is$), this allows one to write that, there exist $\beta \in (1/2,1)$ and bounded constants $K_{ij}$ such that
\begin{equation} \left|\frac{\partial^{i+j} c(u,v)}{\partial u^i \partial v^j} \right| \leq  \ \frac{K_{ij}}{u^{\beta+i}(1-u)^{\beta+i}v^{\beta+j}(1-v)^{\beta+j}}, \label{eqn:derivunb} \end{equation}
for all $(u,v) \in (0,1)^2$. Then, (\ref{eqn:unb}) in (\ref{eqn:fST}) yields 
\[f_{ST}(s,t) \leq \frac{K_{00}\phi(s)\phi(t)}{\Phi(s)^{\beta}(1-\Phi(s))^{\beta}\Phi(t)^{\beta}(1-\Phi(t))^{\beta}} \]
for all $(s,t) \in \R^2$. Given that, from the known properties of the normal distribution, $\sup_{s \in \R} \frac{\phi(s)}{\Phi(s)^{\beta}(1-\Phi(s))^{\beta}}$ is bounded for any constant $\beta < 1$, there exists a constant $M$ such that $\sup_{(s,t) \in \R^2} f_{ST}(s,t) \leq  M < \infty$. Now from (\ref{eqn:firstderfST}) one can write
\[\left|\frac{\partial f_{ST}}{\partial s}(s,t)\right|  \leq \left|\frac{\partial c}{\partial u}(\Phi(s),\Phi(t))\right|\phi^2(s) \phi(t) + |s|c(\Phi(s),\Phi(t))  \phi(s)\phi(t). \]
From (\ref{eqn:unb}) and (\ref{eqn:derivunb}) with $i=1$, $j=0$, one gets
\[ \left|\frac{\partial f_{ST}}{\partial s}(s,t)\right|  \leq \frac{K_{10}\phi^2(s) \phi(t)}{\Phi(s)^{\beta+1}(1-\Phi(s))^{\beta+1}\Phi(t)^{\beta}(1-\Phi(t))^{\beta}}  + |s|\frac{K_{00}\phi(s)\phi(t)}{\Phi(s)^{\beta}(1-\Phi(s))^{\beta}\Phi(t)^{\beta}(1-\Phi(t))^{\beta}}. \]
Take $\gamma = (\beta + 1)/2 < 1$ and see that, as above, $\sup_{s \in \R}\frac{\phi(s)}{\Phi(s)^{\gamma}(1-\Phi(s))^{\gamma}}$ is bounded, so that  the first term is uniformly bounded on $\R^2$. The second term is uniformly bounded as well, as $\sup_{s \in \R}\frac{s\phi(s)}{\Phi(s)^{\beta}(1-\Phi(s))^{\beta}}$ is also bounded for any $\beta < 1$. All second-order partial derivatives of $f_{ST}$ can be uniformly bounded in the exact same way using (\ref{eqn:derivunb}) in (\ref{eqn:secondderfST}) and similar. \end{proof}

\subsection*{Proof of Proposition \ref{thm:fShat}}

Denote
\[C_n(u,v) = \frac{1}{n} \sum_{i=1}^n \indic{U_i \leq u,V_i \leq v}, \]
i.e.\ the `ideal' version of the empirical copula (\ref{eqn:empcop}) using genuine observations $(U_i,V_i)$'s, and the corresponding empirical process $\{\B_n(u,v):(u,v) \in \Is\}$, with
\[\B_n(u,v) = \sqrt{n}(C_n(u,v) - C(u,v)). \]
Also, define the process $\{\G_n(u,v):(u,v) \in \Is\}$ with
\begin{equation}
\G_n(u,v) = \B_n(u,v) - \frac{\partial C}{\partial u}(u,v) \B_n(u,1) - \frac{\partial C}{\partial v}(u,v) \B_n(1,v). \label{eqn:Gn} 
\end{equation}
\cite{Segers12} shows that, under Assumptions \ref{ass:FXY}-\ref{ass:copdens}, the empirical copula process $\C_n(u,v)$ (see (\ref{eqn:limitempcop})) and $\G_n(u,v)$ are such that
\begin{equation} \sup_{(u,v) \in \Is} |\C_n(u,v) - \G_n(u,v)| = O_{\text{a.s.}}\left(n^{-1/4}(\log n)^{1/2} (\log \log n)^{1/4}\right) \qquad  n \to \infty.\label{eqn:strong} \end{equation}

\ppn Now, see that
\[ \sqrt{nh^2} \left( \hat{f}_{ST}(s,t) - \E\left(\hat{f}^*_{ST}(s,t) \right)\right) = \frac{1}{h} \iint_\Is \phi\left(\frac{s-\Phi^{-1}(u)}{h}\right)\phi\left(\frac{t-\Phi^{-1}(v)}{h}\right)\,d\C_n(u,v).\]
Integrating by parts, as in the proof of Theorem 6 of \cite{Fermanian04}, one gets
\begin{align} \sqrt{nh^2} \left( \hat{f}_{ST}(s,t) - \E\left(\hat{f}^*_{ST}(s,t) \right)\right) &  \notag \\ = \frac{1}{h} \iint_\Is & \C_n(u,v) \,\phi'\left(\frac{s-\Phi^{-1}(u)}{h}\right)\phi'\left(\frac{t-\Phi^{-1}(v)}{h}\right)\,\frac{du}{h\phi(\Phi^{-1}(u))}\,\frac{dv}{h\phi(\Phi^{-1}(v))}  \notag\\
 = \frac{1}{h} \iint_\Is & \G_n(u,v) \,\phi'\left(\frac{s-\Phi^{-1}(u)}{h}\right)\phi'\left(\frac{t-\Phi^{-1}(v)}{h}\right)\,\frac{du}{h\phi(\Phi^{-1}(u))}\,\frac{dv}{h\phi(\Phi^{-1}(v))} \notag \\
& + R_n(s,t) \label{eqn:Gn2}
\end{align}
where 
\begin{align*} |R_n(s,t)| & \leq \frac{1}{h} \sup_{(u,v) \in \Is} | \C_n(u,v)- \G_n(u,v)| \left\{\int_0^1  \left|\phi'\left(\frac{s-\Phi^{-1}(u)}{h}\right)\right|\,\frac{du}{h\phi(\Phi^{-1}(u))}\right\}^2 \\
& =  \frac{1}{h} \sup_{(u,v) \in \Is} | \C_n(u,v)- \G_n(u,v)| \left\{\int_\R |z| \phi(z) \,dz \right\}^2 \\
& = O_{\text{a.s.}}\left(n^{-1/4}h^{-1}(\log n)^{1/2} (\log \log n)^{1/4}\right) \\ 
& = o_{\text{a.s.}}(1),\end{align*}
given that $\E(|\Ns(0,1)|)<\infty$ and the conditions on the bandwidth $h$. Call 
\[ J_{st,h}(u,v) = \phi\left(\frac{s-\Phi^{-1}(u)}{h}\right)\phi\left(\frac{t-\Phi^{-1}(v)}{h}\right),\]
so that
\[ dJ_{st,h}(u,v) = \phi'\left(\frac{s-\Phi^{-1}(u)}{h}\right)\phi'\left(\frac{t-\Phi^{-1}(v)}{h}\right) \frac{du}{h\phi(\Phi^{-1}(u))}\,\frac{dv}{h\phi(\Phi^{-1}(v))}.\]
Plugging (\ref{eqn:Gn}) in (\ref{eqn:Gn2}) yields
\begin{align*}
 \sqrt{nh^2} \left( \hat{f}_{ST}(s,t) - \E\left(\hat{f}^*_{ST}(s,t) \right)\right) = &\  \frac{1}{h} \iint_\Is \B_n(u,v) \,dJ_{st,h}(u,v)\\
 & - \frac{1}{h} \iint_\Is  \frac{\partial C}{\partial u}(u,v) \B_n(u,1) \,dJ_{st,h}(u,v) \\
 & - \frac{1}{h} \iint_\Is  \frac{\partial C}{\partial v}(u,v) \B_n(1,v)\,dJ_{st,h}(u,v) + R_{n}(s,t)\\ 
\doteq &\  A_n(s,t) + B_{n,1}(s,t) + B_{n,2}(s,t)+R_{n}(s,t).
\end{align*}
The process $\B_n(u,v)$ being the bivariate empirical process based on genuine observations, $A_n(s,t)$ is, in fact, $\sqrt{nh^2} \left( \hat{f}^*_{ST}(s,t) - \E\left(\hat{f}^*_{ST}(s,t) \right)\right)$, for which classical kernel smoothing theory states that 
\[A_n(s,t)  \toL \Ns\left( 0 , \frac{f_{ST}(s,t)}{4\pi} \right).   \]
The terms $B_{n,1}$ and $B_{n,2}$ can be worked out explicitly. This is done below for $B_{n,1}$ only ($B_{n,2}$ can be treated in the exact same way). Write 
\[B_{n,1}(s,t) = - \frac{1}{h} \int_0^1   \B_n(u,1) \,\phi'\left(\frac{s-\Phi^{-1}(u)}{h}\right) \left\{ \int_0^1 \frac{\partial C}{\partial u}(u,v)  \phi'\left(\frac{t-\Phi^{-1}(v)}{h}\right)\,\frac{dv}{h\phi(\Phi^{-1}(v))}\right\} \frac{du}{h\phi(\Phi^{-1}(u))} \] 
and proceed with 
\begin{align*} \psi(u) & \doteq \int_0^1 \frac{\partial C}{\partial u}(u,v)  \phi'\left(\frac{t-\Phi^{-1}(v)}{h}\right)\,\frac{dv}{h\phi(\Phi^{-1}(v))} \\ 
& = \int_\R \frac{\partial C}{\partial u}(u,\Phi(t-hz))  \phi'\left(z\right)\,dz\end{align*}
with the change of variable $z=h^{-1}(t-\Phi^{-1}(v))$. As $C(u,v) = F_{ST}(\Phi^{-1}(u),\Phi^{-1}(v))$, this is also
\begin{equation} \psi(u) = \frac{1}{\phi(\Phi^{-1}(u))} \int_\R \frac{\partial F_{ST}}{\partial s}(\Phi^{-1}(u),t-hz)  \phi'\left(z\right)\,dz. \label{eqn:psiu} \end{equation}
Taylor-expanding, one gets
\begin{multline*} \frac{\partial F_{ST}}{\partial s}(\Phi^{-1}(u),t-hz) = \frac{\partial F_{ST}}{\partial s}(\Phi^{-1}(u),t) - hz \frac{\partial^2 F_{ST}}{\partial s \partial t}(\Phi^{-1}(u),t) \\ + \frac{1}{2}h^2z^2 \frac{\partial^3 F_{ST}}{\partial s \partial t^2}(\Phi^{-1}(u),t) -  \frac{1}{6}h^3z^3 \frac{\partial^4 F_{ST}}{\partial s \partial t^3}(\Phi^{-1}(u),t-h^*z),  \end{multline*}
where $h^* \in (0,h)$, of which only odd powers of $z$ will remain in (\ref{eqn:psiu}) as $\phi'$ is an odd function. Hence,
\begin{align*} \psi(u) & = -\frac{1}{\phi(\Phi^{-1}(u))} \left(h f_{ST}(\Phi^{-1}(u),t) \int_\R z \phi'(z)\,dz +\frac{1}{6}h^3 \int_\R z^3 \frac{\partial^2 f_{ST}}{\partial t^2}(\Phi^{-1}(u),t-h^*z)\phi'(z)\,dz \right) \\
& =  \frac{1}{\phi(\Phi^{-1}(u))} \left(h f_{ST}(\Phi^{-1}(u),t) +\frac{1}{6}h^3 \psi^*(u) \right), \end{align*}
since $\int_\R z \phi'(z)\,dz = -1$, denoting $\psi^*(u) = -\int_\R z^3 \frac{\partial^2 f_{ST}}{\partial t^2}(\Phi^{-1}(u),t-h^*z)\phi'(z)\,dz $.

\ppn It follows that 
\begin{align*} B_{n,1}(s,t) =   - &\  \frac{1}{h} \int_0^1   \B_n(u,1) \,\phi'\left(\frac{s-\Phi^{-1}(u)}{h}\right)\psi(u) \frac{du}{h\phi(\Phi^{-1}(u))} \\
=   & - \  \int_0^1   \B_n(u,1) f_{ST}(\Phi^{-1}(u),t)\phi'\left(\frac{s-\Phi^{-1}(u)}{h}\right)\,  \frac{du}{h\phi^2(\Phi^{-1}(u))} \\
& - \frac{1}{6}h^2 \int_0^1   \B_n(u,1) \,\phi'\left(\frac{s-\Phi^{-1}(u)}{h}\right) \psi^*(u) \frac{du}{h\phi^2(\Phi^{-1}(u))} \\
\doteq \quad &  B_{n,11}(s,t) +B_{n,12}(s,t).
 \end{align*}
The change of variable $w = \frac{s-\Phi^{-1}(u)}{h}$ yields
\begin{align*} B_{n,11}(s,t) = &\  \int_\R \B_n(\Phi(s-hw),1) \phi'(w) \frac{f_{ST}(s-hw,t)}{\phi(s-hw)}\,dw \\
= & \ \B_n(\Phi(s),1) \frac{f_{ST}(s,t)}{\phi(s)} \int_\R \phi'(w) \,dw \\
&  - \frac{f_{ST}(s,t)}{\phi(s)} \int_\R (\B_n(\Phi(s),1)-\B_n(\Phi(s)-h\phi(s-h^*w),1))\phi'(w) \,dw \\
& + h \int_\R \B_n(\Phi(s-hw),1) \phi'(w) \frac{\partial}{\partial s}\left\{\frac{f_{ST}(s-h^{**}w,t)}{\phi(s-h^{**}w)}\right\}\,dw, \end{align*}
where $h^*,h^{**}$ are both between $0$ and $h$. The first term in this decomposition is 0, as $\phi'$ is an odd function. The second term tends to 0 as $n \to 0$. This is because $\B_n(\cdot,1)$ is nothing else than a usual univariate uniform empirical process. \citet{Einmahl87} studied its modulus of continuity and their Theorem 3.1(b) allows one to write, under these assumptions,
\[  \sup_{u,u' \in [0,1],|u-u'|\leq h} |\B_n(u,1)-\B_n(u',1)| = \sqrt{2h \log h^{-1}} \qquad \text{ almost surely as } n \to \infty.  \]
Hence the second term can readily be seen to be $O_{\text{a.s.}}(\sqrt{h \log h^{-1}})$ as $n \to \infty$, i.e.\ $o_{\text{a.s.}}(1)$ as $h \to 0$. The integral in the third term is $O_P(1)$. Given that $\frac{\partial}{\partial s}\left\{\frac{f_{ST}(s-h^{**}w,t)}{\phi(s-h^{**}w)}\right\} = \frac{\partial f_{ST}/\partial s}{\phi}(s-h^{**}w,t)+ (s-h^{**}w) \frac{f_{ST}}{\phi}(s-h^{**}w,t)$, one can write
\begin{multline*} \int_\R \phi'(w) \frac{\partial}{\partial s}\left\{\frac{f_{ST}(s-h^{**}w,t)}{\phi(s-h^{**}w)}\right\}\,dw \\ = - \int_\R w\frac{\phi(w)}{\phi(s-h^{**}w)} \frac{\partial f_{ST}}{\partial s}(s-h^{**}w,t)\,dw - \int_\R (s-h^{**}w)w\frac{\phi(w)}{\phi(s-h^{**}w)} f_{ST}(s-h^{**}w,t)\,dw. \end{multline*}
This is bounded for $n$ large enough, because $f_{ST}$ and $\frac{\partial f_{ST}}{\partial s}$ are uniformly bounded on $\R^2$ by Lemma \ref{lem:unicont}, and 
\[\int_\R \left|w^j \frac{\phi(w)}{\phi(s-h^{**}w)}\right|\,dw < \infty, \qquad j=1,2, \]
for $h^{**} < 1$, which eventually occurs as $n \to \infty$. As $\sup_{w} \B_n(\Phi(s-hw),1) = O_P(1)$, one can see that
\[|B_{n,11}(s,t)| = O_P(h) + O_{\text{a.s.}}(\sqrt{h \log h^{-1}}) = o_P(1) \qquad \text{ as } n \to \infty. \]
The same can be written for $|B_{n,12}(s,t)|$, given that $\psi^*(u)$ is a uniformly bounded function of $u$ on $[0,1]$, from Lemma \ref{lem:unicont}. Hence,
\[ \sqrt{nh^2} \left( \hat{f}_{ST}(s,t) - \E\left(\hat{f}^*_{ST}(s,t) \right)\right) \toL \Ns\left(0,\frac{f_{ST}(s,t)}{4\pi} \right). \]
The final result follows by seeing that, using classical kernel density estimation results,
\[ \E\left(\hat{f}^*_{ST}(s,t)\right) = f_{ST}(s,t)+\frac{1}{2}h^2\left(\frac{\partial^2 f_{ST}}{\partial s^2}(s,t)+\frac{\partial^2 f_{ST}}{\partial t^2}(s,t) \right) +o(h^2) \qquad \text{ as } n \to \infty.\]
\qed

\subsection*{Proof of Theorem \ref{thm:chat}}

Result (\ref{eqn:normfSThat}) can be written at $(s,t) = (\Phi^{-1}(u),\Phi^{-1}(v))$:
\begin{multline*} \sqrt{nh^2}\left(\hat{f}_{ST}(\Phi^{-1}(u),\Phi^{-1}(v)) - f_{ST}(\Phi^{-1}(u),\Phi^{-1}(v))-h^2 b_{ST}(\Phi^{-1}(u),\Phi^{-1}(v))\right)  \\ \toL \Ns\left(0 , \sigma_{ST}^2(\Phi^{-1}(u),\Phi^{-1}(v)) \right),\end{multline*}
which implies 
\begin{multline} \sqrt{nh^2}\left(\frac{\hat{f}_{ST}(\Phi^{-1}(u),\Phi^{-1}(v))}{\phi(\Phi^{-1}(u))\phi(\Phi^{-1}(v))} - \frac{f_{ST}(\Phi^{-1}(u),\Phi^{-1}(v))}{\phi(\Phi^{-1}(u))\phi(\Phi^{-1}(v))}-h^2 \frac{b_{ST}(\Phi^{-1}(u),\Phi^{-1}(v))}{\phi(\Phi^{-1}(u))\phi(\Phi^{-1}(v))}\right)  \\ \toL \Ns\left(0 , \frac{\sigma_{ST}^2(\Phi^{-1}(u),\Phi^{-1}(v))}{\phi^2(\Phi^{-1}(u))\phi^2(\Phi^{-1}(v))}\right). \label{eqn:toL} \end{multline}
Now, 
\[ \frac{\sigma_{ST}^2(\Phi^{-1}(u),\Phi^{-1}(v))}{\phi^2(\Phi^{-1}(u))\phi^2(\Phi^{-1}(v))} = \frac{f_{ST}(\Phi^{-1}(u),\Phi^{-1}(v))}{4\pi\phi^2(\Phi^{-1}(u))\phi^2(\Phi^{-1}(v))} = \frac{c(u,v)}{4\pi\phi(\Phi^{-1}(u))\phi(\Phi^{-1}(v))},\]
using (\ref{eqn:truecopratio}). In addition, from (\ref{eqn:secondderfST}), it is easily seen that $\frac{b_{ST}(\Phi^{-1}(u),\Phi^{-1}(v))}{\phi(\Phi^{-1}(u))\phi(\Phi^{-1}(v))}$ has the form (\ref{eqn:biaschat}). Using (\ref{eqn:truecopratio})/(\ref{eqn:copratio}) in (\ref{eqn:toL}) then concludes the proof. \qed

\subsection*{Proof of Theorem \ref{thm:ctilde1}}

From Proposition \ref{thm:fShat} one can conclude that 
\begin{equation} |\hat{f}_{ST}(s,t) - \hat{f}^*_{ST}(s,t)| = o_P((nh^2)^{-1/2}). \label{eqn:o} \end{equation}
Now, from (\ref{eqn:corrslope}), one has
\[ |\tilde{f}^{(1)}_{ST}(s,t) - \tilde{f}^{*(1)}_{ST}(s,t)| = \left|\hat{f}_{ST}(s,t) \exp\left\{-\frac{1}{2} h^2 \hat{\psi}_{ST}(s,t) \right\} - \hat{f}^*_{ST}(s,t)\exp\left\{-\frac{1}{2} h^2 \hat{\psi}^*_{ST}(s,t) \right\}\right|,
\]
where 
\[\hat{\psi}_{ST}(s,t) = \left(\frac{\partial \hat{f}_{ST}(s,t)/\partial s}{\hat{f}_{ST}(s,t)} \right)^2 + \left(\frac{\partial \hat{f}_{ST}(s,t)/\partial t}{\hat{f}_{ST}(s,t)} \right)^2 \]
and equivalently for the star version. Then,
\begin{align*} 
 |\tilde{f}^{(1)}_{ST}(s,t) - \tilde{f}^{*(1)}_{ST}(s,t)| \leq &\  |\hat{f}_{ST}(s,t) - \hat{f}^*_{ST}(s,t)|\exp\left\{-\frac{1}{2} h^2 \hat{\psi}_{ST}(s,t) \right\} \\
&  + \hat{f}^*_{ST}(s,t) \left|\exp\left\{-\frac{1}{2} h^2 \hat{\psi}_{ST}(s,t) \right\}-\exp\left\{-\frac{1}{2} h^2 \hat{\psi}^*_{ST}(s,t) \right\} \right|.
\end{align*}
The first term is $o_P((nh^2)^{-1/2})$, given (\ref{eqn:o}) and $\exp\left\{-\frac{1}{2} h^2 \hat{\psi}_{ST}(s,t) \right\} \overset{P}{\to} 1$ as $n \to \infty$ (i.e.\ as $h^2 \to 0$). 

\ppn The second term can be written 
\begin{equation} \frac{1}{2}\,h^2\,\hat{f}^*_{ST}(s,t) \exp\left\{-\frac{1}{2} h^2 \check{\psi}^*_{ST}(s,t) \right\} \left|\hat{\psi}_{ST}(s,t) -\hat{\psi}^*_{ST}(s,t) \right|, \label{eqn:2nd}\end{equation}
where $\check{\psi}^*_{ST}(s,t) $ is between $\hat{\psi}_{ST}(s,t)$ and $\hat{\psi}^*_{ST}(s,t)$. Now, it can be seen that
\begin{equation} \frac{\left|\hat{\psi}_{ST}(s,t) -\hat{\psi}^*_{ST}(s,t) \right|}{\left|\frac{\partial \hat{f}_{ST}}{\partial s}-\frac{\partial \hat{f}^*_{ST}}{\partial s} \right|(s,t) + \left|\frac{\partial \hat{f}_{ST}}{\partial t}-\frac{\partial \hat{f}^*_{ST}}{\partial t} \right|(s,t)} = O_P\left(1\right).\label{eqn:psi} \end{equation}  
From standard kernel arguments, it is known that, if $nh^4 \to \infty$ as $n \to \infty$, $\partial \hat{f}^*_{ST}(s,t)/\partial s$ is a consistent estimator of $\partial f_{ST}(s,t)/\partial s$, with variance $O((nh^4)^{-1})$ and bias $O(h^2)$. In a way very similar to the proof of Proposition \ref{thm:chat}, one can show that resorting to pseudo-observations does not change the statistical properties of that estimator of $\partial f_{ST}(s,t)/\partial s$. Hence $|\partial \hat{f}^*_{ST}(s,t)/\partial s-\partial \hat{f}_{ST}(s,t)/\partial s| = o_P((nh^4)^{-1/2})$ (and same for the partial derivatives with respect to $t$). It follows that $\left|\hat{\psi}_{ST}(s,t) -\hat{\psi}^*_{ST}(s,t) \right| = o_P\left((nh^4)^{-1/2}\right)$. With the factor $h^2$ in (\ref{eqn:2nd}), this means that the second term is of order $o_P(n^{-1/2})$, which is obviously $o_P((nh^2)^{-1/2})$. Then,
\[|\tilde{f}^{( 1)}_{ST}(s,t) - \tilde{f}^{*(1)}_{ST}(s,t)| = o_P((nh^2)^{-1/2}), \]
and since this order carries over to $|\tilde{c}^{(\tau,1)}(u,v) - \tilde{c}^{*(\tau, 1)}(u,v)|$ in a straightforward way, (\ref{eqn:normctildestar1}) also holds with $\tilde{c}^{(\tau,1)}(u,v)$ instead of $\tilde{c}^{*(\tau, 1)}(u,v)$. \qed

\subsection*{Proof of Theorem \ref{thm:ctilde2}}

It is very similar to the proof of Theorem \ref{thm:ctilde1}, based on the analogue of (\ref{eqn:corrslope}) for $\tilde{f}^{(2)}_{ST}(s,t)$ (i.e.\ the bivariate version of equation (5.2) in \cite{Hjort96}). It is, therefore, omitted. The reason why a condition on $h$ stronger than previously is needed is that here, the analogue of (\ref{eqn:psi}) involves the second order partial derivatives of $\hat{f}_{ST}$. To have those consistent for the corresponding partial derivatives of $f_{ST}$, indeed, requires $nh^6 \to \infty$. \qed


\end{document}